\pgfplotsset{compat=1.18}
\newsavebox{\@brx}
\newcommand{\llangle}[1][]{\savebox{\@brx}{\(\m@th{#1\langle}\)}%
  \mathopen{\copy\@brx\mkern2mu\kern-0.9\wd\@brx\usebox{\@brx}}}
\newcommand{\rrangle}[1][]{\savebox{\@brx}{\(\m@th{#1\rangle}\)}%
  \mathclose{\copy\@brx\mkern2mu\kern-0.9\wd\@brx\usebox{\@brx}}}
\def\Nat{\bbbn}
\def\Real{\bbbr}
\def\ind{{\mathbf{1}}}
\def\Prob{{\sf P}}
\def\Expect{{\sf E}}
\def\mc{{\bf \Phi}}
\renewcommand{\orcidID}[1]{}
\begin{document}
\title{Quantitative Supermartingale Certificates}

\author{
Alessandro Abate\inst{1}\orcidID{0000-0002-5627-9093} 
\and
Mirco Giacobbe\inst{2}\orcidID{0000-0001-8180-0904} 
\and
Diptarko Roy\inst{2}\orcidID{0009-0003-4306-2076}
}

\institute{University of Oxford, UK  \\\email{alessandro.abate@cs.ox.ac.uk}\and
University of Birmingham, UK\\
\email{\{m.giacobbe,d.s.roy\}@bham.ac.uk}
}

\maketitle              
\begin{abstract}
We introduce a general methodology for quantitative model checking and control synthesis with supermartingale certificates. 
We show that every specification that is invariant to time shifts admits a stochastic invariant that bounds its probability from below;
for systems with general state space, the stochastic invariant bounds this probability as closely as desired; 
for systems with finite state space, it quantifies it exactly. 
Our result enables the extension of every certificate for the almost-sure satisfaction of shift-invariant specifications to its quantitative counterpart, ensuring completeness up to an approximation in the general case and exactness in the finite-state case. 
This generalises and unifies existing supermartingale certificates for quantitative verification and control under 
reachability, safety, reach-avoidance, and stability specifications, as well as asymptotic bounds on accrued costs and rewards.
Furthermore, our result provides the first supermartingale certificate for computing upper and lower bounds on the probability of satisfying $\omega$-regular and linear temporal logic specifications.
We present an algorithm for quantitative $\omega$-regular verification and control synthesis based on our method and demonstrate its practical efficacy on several infinite-state examples.

\keywords{Probabilistic model checking  \and Stochastic control synthesis \and Probability bounds \and LTL \and Martingale theory \and Converse theorems} 
\end{abstract}

\section{Introduction}
Quantitative model checking for probabilistic systems is the problem of computing the probability that a given stochastic dynamical system or probabilistic program satisfies a specification of intended behaviour. Quantitative control synthesis extends this to the construction of a control policy that maximises or meets a threshold for the probability of satisfying a desired objective within a given stochastic environment.
Computing provable bounds on the probability that a system satisfies its specification is crucial for model checking and control synthesis when neither worst-case nor almost-sure satisfaction can be achieved and failure to comply must be tolerated within acceptable margins. Notable examples include many randomized distributed algorithms and cryptographic protocols, cyber-physical systems and biochemical processes under random parameter and input uncertainty, and machine learning algorithms facing aleatoric uncertainty in their data and epistemic uncertainty in their models. 

Algorithmic technologies for quantitative model checking and control synthesis have been developed extensively for 
probabilistic systems. The standard techniques rely on computing the absorbing components, reduction to linear programming, 
tabular value and policy iteration as well as symbolic algorithms 
based on multi-terminal binary decision diagrams~\cite{DBLP:phd/ethos/Parker03,DBLP:conf/sfm/ForejtKNP11,DBLP:conf/tacas/ForejtKNPQ11,DBLP:conf/cav/KwiatkowskaNP11,DBLP:journals/sttt/HenselJKQV22}. 
This represents the state of the art for systems with a finite state space 
but, falls short for systems with a countably infinite or 
continuous 
state space, 
which is common in probabilistic programs, control systems, and machine learning models.
The automated verification and control of infinite-state probabilistic systems builds upon 
either the construction of finite abstractions---grounded in concurrency theory---or 
the construction of proof certificates---grounded in martingale theory~\cite{DBLP:conf/lics/BluteDEP97,DBLP:journals/iandc/LarsenS91,DBLP:conf/tacas/ChakarovVS16,DBLP:conf/cav/ChakarovS13,jpk}. 

Proof certificates for the analysis of dynamical systems and computer programs are typically expressed as functions or regions of the state space that evidence invariant properties of the system~\cite{hahn2019theory,floyd1993assigning}. 
Certificates for the quantitative and qualitative analysis of stochastic processes---known as supermartingale certificates---have been widely studied, 
especially in stochastic control with a focus on asymptotic stability, reachability, and avoidance objectives~\cite{kushner1967stochastic,meyn_tweedie_glynn_2009,douc2018markov}.
While traditionally these proof certificates are characterised analytically, hence requiring significant manual effort for their actual derivation, 
their automated construction has recently gained momentum due to advances in numerical methods~\cite{DBLP:conf/cdc/Papachristodoulou02,DBLP:conf/cdc/PrajnaJP04,DBLP:journals/tac/PrajnaJP07,parrilo2000structured},
as well as machine learning techniques for this purpose~\cite{DBLP:conf/concur/AbateEGPR23}. 
Automation in the construction of supermartingale certificates has stimulated their adoption in termination analysis~\cite{DBLP:conf/cav/ChakarovS13,DBLP:conf/popl/ChatterjeeNZ17,DBLP:journals/toplas/ChatterjeeFNH18,DBLP:conf/cav/ChatterjeeGMZ22,DBLP:conf/cav/AbateGR20,DBLP:conf/esop/MoosbruggerBKK21}, 
reachability, safety and reach-avoidance analysis~\cite{DBLP:conf/tacas/BatzCJKKM23,DBLP:journals/tecs/HuangCLYL17,DBLP:conf/atva/JagtapSZ18,DBLP:conf/pldi/WangS0CG21,DBLP:conf/amcc/LavaeiSF22},
 cost bound analysis~\cite{DBLP:conf/pldi/Wang0GCQS19,DBLP:journals/pacmpl/MoosbruggerSBK22,DBLP:conf/cav/SunFCG23,DBLP:journals/pacmpl/ChatterjeeGMZ24}, 
stochastic control synthesis and learning~\cite{DBLP:journals/tac/JagtapSZ21,DBLP:conf/aaai/LechnerZCH22,DBLP:conf/nips/ZikelicLVCH23,DBLP:journals/csysl/MathiesenCL23,DBLP:conf/aaai/ZikelicLHC23,DBLP:conf/cav/AbateGR24}. 

We present a general methodology for the formalisation of quantitative proof certificates for probabilistic systems and demonstrate 
its practical application in developing model checking and control synthesis algorithms. 
We show that every specification that falls within the class of {\em shift-invariant} events 
admits a stochastic invariant that bounds its probability from below. 
A stochastic invariant is a region of the state space associated with a supermartingale that is sufficient to 
bound from above the probability of leaving the invariant. We provide two converse theorems for their necessary existence:
for systems with general state space, we establish the existence of a stochastic invariant 
that is sufficiently strong to bound the probability of the shift-invariant specification up to arbitrary approximation; 
for systems with finite state space, we establish the existence of a stochastic invariant that quantifies its probability exactly.

Our result reduces the problem of computing a lower bound on the probability of a shift-invariant specification 
to the problem of computing a stochastic invariant alongside a 
proof certificate for the almost-sure satisfaction of the specification. Our reduction is complete up to arbitrary approximation for systems with general state space,  complete for systems with finite state space, and applies to a rich class of specifications. 
Shift-invariant specifications encompass Büchi and co-Büchi acceptance conditions, which have existing quantitative certificates~\cite{DBLP:journals/csysl/AjeleyeZ24a}, as well as Muller, parity, Rabin, and Streett conditions, for which no quantitative certificates have previously been presented. As such, our method not only unifies existing results but also lays the foundations for developing new quantitative supermartingale certificates. 

We instantiate our theory to the design and implementation of the first supermartingale certificate for the 
quantitative verification and control of $\omega$-regular and linear temporal logic (LTL) specifications. 
We leverage our theory alongside two existing results.
Firstly, $\omega$-regular and LTL specifications enjoy reduction to Streett acceptance conditions through 
composition with deterministic Streett automata~\cite{DBLP:conf/focs/Safra88}. 
Secondly, Streett acceptance conditions have supermartingale certificates for their 
almost-sure satisfaction with supporting invariants~\cite{DBLP:conf/cav/AbateGR24}.
Since Streett acceptance conditions are shift-invariant, our theory extends the existing 
supermartingale certificates for almost-sure Streett acceptance to additionally quantify lower and upper bounds on the acceptance probability. This enables the algorithmic $\omega$-regular quantitative verification and control of probabilistic systems with general state space, encompassing and generalising safety, reachability, reach-avoidance, recurrence, persistence properties and LTL.  

We demonstrate the practical efficacy of our method with a prototype 
for the simultaneous construction of parametrised supermartingale certificates alongside parametrised control policies expressed as polynomials of known degree. 
We leverage polynomial Positivstellensatz results to reduce it 
to a decision problem over the existential theory of the reals, amenable to satisfiability solving modulo quantifier-free non-linear real arithmetic~\cite{DBLP:journals/cca/JovanovicM12,DBLP:conf/cav/BjornerN24}. 
Our algorithm is sound and complete  
relatively to the existence of the almost-sure component of our certificates and up to a desired approximation error. 
We compute upper and lower probability bounds using polynomials of varying degree on several examples  
with infinite state space, which are beyond the reach of the existing tools. 

Our contribution is threefold. First, we present a general theory of quantitative supermartingale certificates,
which extends every certificate for almost-sure acceptance of shift-invariant specifications to their quantitative counterpart. 
Second, we introduce a special theory of quantitative Streett supermartingale certificates based on our methodology, 
which results in the first quantitative supermartingale certificate for $\omega$-regular specifications and LTL.
Third, we implement our theory in an algorithm for quantitative $\omega$-regular verification and control, 
and demonstrate its practical efficacy on examples.

\section{Stochastic Invariants}

We consider stochastic systems over general state space $(S, \Sigma)$, where $S$ denotes the set of states
and $\Sigma$ denotes the associated $\sigma$-algebra. We treat quantitative model checking and control synthesis 
problems for specifications over an infinite time horizon measured over $(\Omega, {\cal F})$, 
where the set of outcomes $\Omega = S^\omega$ are the infinite trajectories 
and the set of events ${\cal F} = \bigotimes_{i \in \omega} \Sigma_i$ (with $\Sigma_i = \Sigma$) 
are the measurable specifications. 
As is standard in stochastic analysis~\cite{meyn_tweedie_glynn_2009}, 
we rely on the result that 
every initial probability measure and transition probability kernel gives rise to a well-defined probability measure over specifications.

\begin{theorem}
\label{thm:meas-spec}
Let $\mu : \Sigma \to [0, 1]$ be an initial probability measure and 
$P : S \times \Sigma \to [0, 1]$ be a transition probability kernel. Then, there exists a stochastic process $\mathbf{\Phi} = ( \Phi_0, \Phi_1, \ldots )$ on the trajectory space $(\Omega, \mathcal{F})$ 
and a probability measure ${\sf P}_\mu : \mathcal{F} \to [0,1]$ where ${\sf P}_\mu(\mathbf{\Phi} \in L)$ is the probability that $\mc$ satisfies the specification $L \in {\cal F}$ and, for every $n \in \Nat$ and 
$A_0 \in \Sigma, \ldots, A_n \in \Sigma$, the following holds: 
\begin{multline}
        {\sf P}_\mu(\Phi_0 \in A_0 \land \dots \land \Phi_n \in A_n)
    =\\\int_{s_0 \in A_0} \cdots \int_{s_{n-1} \in A_{n-1}} 
    \mu(\mathrm{d}s_0)
    P(s_0, \mathrm{d}s_1) 
    \cdots 
    P(s_{n - 1}, A_n).\label{eqn:measure-traj}
\end{multline}
\end{theorem}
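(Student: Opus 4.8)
The plan is to realise ${\sf P}_\mu$ as the unique measure furnished by the Ionescu--Tulcea extension theorem applied to the family of kernels $(\mu, P, P, \ldots)$; this is the natural choice because, unlike the Kolmogorov extension theorem, it requires no topological hypotheses on $(S, \Sigma)$ and therefore covers the fully general state space considered here.

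First I would build the finite-horizon marginals. For each $n \in \Nat$, define a set function $\mu_n$ on the measurable rectangles of $(S^{n+1}, \Sigma^{\otimes(n+1)})$ by the right-hand side of \eqref{eqn:measure-traj}. Well-definedness of this iterated integral rests on the standard fact that if $f : S \times S \to [0,1]$ is $\Sigma \otimes \Sigma$-measurable then $s \mapsto \int_S f(s, s')\, P(s, \mathrm{d}s')$ is $\Sigma$-measurable --- verified first for indicators of rectangles, then extended by a monotone-class argument --- applied inductively from the innermost integral outward. Each $\mu_n$ is then a probability measure on $(S^{n+1}, \Sigma^{\otimes(n+1)})$, being the composition of a probability measure with finitely many Markov kernels.

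Next I would check consistency. Using $P(s_n, S) = 1$ for every $s_n \in S$, integrating out the last coordinate in $\mu_{n+1}$ recovers exactly $\mu_n$, so $(\mu_n)_{n \in \Nat}$ is a projective family along the coordinate projections $S^{n+2} \to S^{n+1}$. The Ionescu--Tulcea theorem then yields a unique probability measure ${\sf P}_\mu$ on $(\Omega, \mathcal{F}) = (S^\omega, \bigotimes_{i \in \omega} \Sigma_i)$ whose pushforward under the projection $\Omega \to S^{n+1}$ equals $\mu_n$ for every $n$. Taking $\Phi_i : \Omega \to S$ to be the $i$-th coordinate map --- measurable by the very definition of the product $\sigma$-algebra --- and setting $\mathbf{\Phi} = (\Phi_0, \Phi_1, \ldots)$, the event $\{\mathbf{\Phi} \in L\}$ is literally $L$, so ${\sf P}_\mu(\mathbf{\Phi} \in L) = {\sf P}_\mu(L)$ is well defined for all $L \in \mathcal{F}$; and identity \eqref{eqn:measure-traj} is precisely the statement that the projection of ${\sf P}_\mu$ onto the first $n+1$ coordinates is $\mu_n$, evaluated on $A_0 \times \cdots \times A_n$.

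The main obstacle is the heart of the Ionescu--Tulcea argument: showing that the set function induced on the algebra of finite cylinders by the consistent family $(\mu_n)$ is countably additive there, equivalently continuous at $\emptyset$, so that Carath\'{e}odory's extension applies. The usual route takes a decreasing sequence of cylinders $C_1 \supseteq C_2 \supseteq \cdots$ with $\inf_k {\sf P}_\mu(C_k) = \varepsilon > 0$ and, by successively integrating out coordinates and exploiting that each partial integral is a measurable function whose expectation stays bounded below by $\varepsilon$, extracts a point $(s_0^\ast, s_1^\ast, \ldots) \in \bigcap_k C_k$, contradicting $\bigcap_k C_k = \emptyset$. Since this part is entirely standard, in the write-up I would invoke the Ionescu--Tulcea theorem directly and relegate the measurability lemmas on iterated kernel integrals to a reference; everything else is bookkeeping with product $\sigma$-algebras.
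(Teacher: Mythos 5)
Your proposal is correct. The paper does not supply its own proof of this theorem: it invokes it as a standard result of stochastic analysis (citing Meyn--Tweedie--Glynn), and the canonical argument behind that citation is precisely the Ionescu--Tulcea construction you describe --- building the consistent family of finite-horizon marginals from the right-hand side of \cref{eqn:measure-traj}, checking projectivity via $P(s,S)=1$, and extending to the product $\sigma$-algebra without any topological hypotheses on $(S,\Sigma)$. Your emphasis on Ionescu--Tulcea over Kolmogorov extension is the right call for the general measurable state space assumed here, and the measurability and continuity-at-$\emptyset$ steps you defer to references are indeed the standard ones.
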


We frame our work around the operation of {\em time shift}, 
which encapsulates the forgetfulness of the process with respect to its past--i.e., the Markov property. 
We define the (time) shift operator $\theta$ as the measurable mapping on $\Omega$ 
\begin{equation}
    \theta(s_0, s_1, \ldots, s_n, \ldots)
    =
    ( s_1, s_2, \ldots, s_{n+1}, \ldots ).
\end{equation}
This characterises time-homogeneous Markov chains over general state spaces, 
our reference model throughout the paper unless stated otherwise. 
Also, henceforth we use $\delta_s : \Sigma \to [0, 1]$ to denote the Dirac measure at $s \in S$.

\begin{definition}[Time-Homogeneous Markov Chains] 
A time-homogeneous Markov chain is a stochastic process $\mc$ defined in terms of
an initial probability measure $\mu : \Sigma \to [0, 1]$ and 
probability transition kernel $P : S \times \Sigma \to [0, 1]$, 
having a natural filtration $\mathcal{F}_n^\Phi = \sigma\left( 
    \Phi_0, \ldots, \Phi_n
    \right) \subseteq \mathcal{F}$ satisfying the Markov property, i.e.,
\begin{equation}\label{eqn:time-hom-m-p}
    {\sf E}_\mu[H \circ \theta^n \mid \mathcal{F}_n^\Phi]
    =
    {\sf E}_{\delta_{\Phi_n}}[H]\qquad \text{a.s.}\ [{\sf P}_\mu]
\end{equation}
for every random variable $H$ on $(\Omega, {\cal F}, \Prob_\mu)$ and every $n \in \Nat$~\cite[p.~70]{meyn_tweedie_glynn_2009}.
\end{definition}

Time-homogeneity allows us to derive global properties of the stochastic process by 
locally reasoning about the transition probability kernel $P$ and the initial probability measure $\mu$. 
For this purpose, we define the post-expectation $(Ph) \colon S \to \Real$ and the init-expectation $(\mu h) \in \Real$ operations of any real-valued measurable function $h \colon S \to \Real$, with respect to the process, as follows:
\begin{equation}\label{def:post-init-expectation}
    Ph(s) = \int_{u \in S} 
    h(u) ~P(s, \mathrm{d}u), 
    \qquad \mu h = 
    \int_{s \in S} h(s)~\mu(\mathrm{d}s).
\end{equation}
These two operators are the essential elements in the formalisation and the construction of supermartingale certificates.
Specifically, the post-expectation $Ph(s)$ of the function $h$ at state $s \in S$ gives the expected value of $h$ at the next state conditional on $s$ being the current state; 
similarly, the init-expectation $\mu h$ gives the expected value of $h$ at the initial time.
The algorithmic synthesis of certificates relies on expressing the post- and init-expectation of value functions in a closed form, 
for which appropriate procedures are available~\cite{DBLP:conf/cav/GehrMV16}.

Our methodology leverages the proof rule for stochastic invariants, which is the most basic form of a quantitative supermartingale 
certificate~\cite[Theorem 1]{kushner1965stability}. 
A stochastic invariant is a region of the state space $I$ associated with a value function $V_0$ that bounds from above the probability that the process 
escapes $I$.
\begin{theorem}[Stochastic Invariants]\label{thm:invar}
    Suppose that there exists a measurable set $I \in \Sigma$ and a measurable function $V_0 \colon S \to \bbbr_{\geq 0}$ 
    such that
    \begin{align}
        \forall s \in I \colon &P V_0(s) \leq V_0(s),\label{eqn:invar-noninc}\\
        \forall s \notin I \colon &V_0(s) \geq 1.\label{eqn:invar-one}
    \end{align}
    Then, ${\sf P}_\mu({\mathbf \Phi} \notin I^\omega) \leq \mu V_0$. 
\end{theorem}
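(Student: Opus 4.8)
The plan is to convert the two pointwise drift conditions into a statement about a stopped process, from which the bound follows by the optional-stopping inequality for nonnegative supermartingales. Let $\tau = \inf\{\, n \in \Nat : \Phi_n \notin I \,\}$ be the first exit time of $\mathbf{\Phi}$ from $I$, with the convention $\tau = \infty$ on the event that the trajectory never leaves $I$; this is a stopping time for the natural filtration $(\mathcal{F}_n^\Phi)$, since $\{\tau \le n\} = \bigcup_{k \le n}\{\Phi_0 \in I, \dots, \Phi_{k-1} \in I, \Phi_k \notin I\} \in \mathcal{F}_n^\Phi$. Observe that $\{\mathbf{\Phi} \notin I^\omega\} = \{\tau < \infty\}$, so it suffices to bound ${\sf P}_\mu(\tau < \infty)$. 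Define the stopped process $M_n = V_0(\Phi_{n \wedge \tau})$, which is nonnegative and $(\mathcal{F}_n^\Phi)$-adapted.

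The first step is to prove that $(M_n)_{n \in \Nat}$ is a supermartingale with respect to $(\mathcal{F}_n^\Phi)$. Fix $n$ and split on the $\mathcal{F}_n^\Phi$-measurable events $\{\tau \le n\}$ and $\{\tau > n\}$. On $\{\tau \le n\}$ we have $M_{n+1} = V_0(\Phi_\tau) = M_n$, so the inequality is trivial. On $\{\tau > n\}$ we have $\Phi_n \in I$ and $M_{n+1} = V_0(\Phi_{n+1})$; applying the Markov property~\eqref{eqn:time-hom-m-p} to the random variable $H = V_0 \circ \Phi_1$, which satisfies $H \circ \theta^n = V_0 \circ \Phi_{n+1}$ and ${\sf E}_{\delta_s}[H] = P V_0(s)$, yields ${\sf E}_\mu[V_0(\Phi_{n+1}) \mid \mathcal{F}_n^\Phi] = P V_0(\Phi_n) \le V_0(\Phi_n) = M_n$, where the inequality is~\eqref{eqn:invar-noninc} and uses $\Phi_n \in I$. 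Combining both cases gives ${\sf E}_\mu[M_{n+1} \mid \mathcal{F}_n^\Phi] \le M_n$ almost surely; iterating the tower property (legitimate for nonnegative random variables even before integrability is established) gives ${\sf E}_\mu[M_n] \le {\sf E}_\mu[M_0] = {\sf E}_\mu[V_0(\Phi_0)] = \mu V_0$ for every $n$.

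The second step reads off the probability bound. On $\{\tau \le n\}$ the exit state lies outside $I$, so $M_n = V_0(\Phi_\tau) \ge 1$ by~\eqref{eqn:invar-one}; together with $M_n \ge 0$ everywhere, this gives the pointwise bound $M_n \ge \ind_{\{\tau \le n\}}$, hence ${\sf P}_\mu(\tau \le n) \le {\sf E}_\mu[M_n] \le \mu V_0$. Since $\{\tau \le n\} \uparrow \{\tau < \infty\}$ as $n \to \infty$, continuity of measure from below yields ${\sf P}_\mu(\mathbf{\Phi} \notin I^\omega) = {\sf P}_\mu(\tau < \infty) = \lim_{n \to \infty} {\sf P}_\mu(\tau \le n) \le \mu V_0$, as claimed.

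I expect the main obstacle to be the conditional-expectation step: one must invoke the Markov property in precisely the stated form, verify that $\{\tau > n\} \in \mathcal{F}_n^\Phi$ so that the case split inside the conditional expectation is valid, and ensure no integrability pathology intervenes---if $\mu V_0 = \infty$ the statement is vacuous, and otherwise the displayed chain of inequalities itself propagates integrability of $M_n$ from that of $M_0$. The remaining ingredients---the optional-stopping bound and the monotone convergence of $\{\tau \le n\}$ to $\{\tau < \infty\}$---are routine.
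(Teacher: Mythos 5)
Your proof is correct: the paper does not prove \cref{thm:invar} itself but cites it from Kushner, and your argument---stopping $V_0(\Phi_{n\wedge\tau})$ at the first exit time, verifying the supermartingale property via the Markov property and \cref{eqn:invar-noninc}, lower-bounding the stopped value by $\ind_{\{\tau\le n\}}$ using \cref{eqn:invar-one}, and passing to the limit by continuity from below---is exactly the classical supermartingale/optional-stopping proof underlying that citation. All the delicate points (measurability of $\{\tau>n\}$, validity of the tower property for nonnegative, possibly non-integrable variables, and the vacuity of the claim when $\mu V_0=\infty$) are handled correctly.
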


We show that stochastic invariants are sufficient to characterise the probability of the rich class of specifications 
that are invariant to time shift, i.e., the specifications that are invariant to addition or deletion of finite prefixes.
\begin{definition}[Shift-Invariant Specifications] 
A specification $L \in {\cal F}$ is invariant to time shift if it satisfies the following property:
\begin{equation}
    \theta^{-1} L = L.\label{eqn:shift-invariant}
\end{equation}
\end{definition} 

\begin{remark}[Connection to Tail Objectives]\label{rmk:tail}
    Specifications satisfying \cref{eqn:shift-invariant} are sometimes referred to as \emph{tail objectives}~\cite{DBLP:journals/tcs/Chatterjee07,DBLP:conf/icalp/KieferMSTW20,DBLP:conf/mfcs/ChatterjeeHH09}.
    In fact, every shift-invariant event is also a tail event, i.e., a member of the tail $\sigma$-algebra $\cap_{i\in \omega} ~\sigma(\Phi_i, \Phi_{i+1}, \dots)$. The converse is not true, and not every tail event is shift-invariant~\cite[p.~260]{douc2018markov}. \qed
\end{remark}
\begin{remark}[Connection to Liveness Properties~\cite{DBLP:journals/dc/AlpernS87}] \label{rmk:liveness}
    Shift invariance is strictly stronger than liveness.
    For example, consider the liveness property $L = \{ \exists n \in \Nat \colon \Phi_n \in A\}$, specifying that $A \in \Sigma$ 
    eventually happens. Under a shift we obtain $\theta^{-1}L = \{ \exists n \in \Nat \colon \Phi_{n+1} \in A\} \neq L$, 
    excluding the option to hit $A$ at time 0.
     \qed
\end{remark}

We address the question of determining the probability for which a time-homogeneous Markov chain $\mc$ satisfies a 
shift-invariant specification $L$ using supermartingale certificates. Our methodology is underpinned by 
the relation between a shift-invariant specification and the random variable characterising its satisfaction probability,
which we show in the following technical result.
\begin{theorem}\label{thm:language-levy}
Suppose that $L \in {\cal F}$ is shift-invariant.  Then
    \begin{equation}
       {\sf P}_\mu({\bf {\Phi}} \in L) = {\sf P}_\mu(~\inf_n {\sf P}_{\delta_{\Phi_n}}({\bf \Phi} \in L) > 0~).\label{eqn:language-levy}
    \end{equation} 
\end{theorem}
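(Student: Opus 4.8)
The plan is to read the right-hand side of \cref{eqn:language-levy} as a statement about a Lévy (Doob) martingale. I would write $f(s) := {\sf P}_{\delta_s}({\bf \Phi} \in L) = {\sf E}_{\delta_s}[\ind_L]$, a measurable map $S \to [0,1]$, so that the quantity inside the probability on the right is $\inf_n f(\Phi_n)$. The first step is to identify $M_n := f(\Phi_n)$ with a version of ${\sf E}_\mu[\ind_L \mid \mathcal{F}_n^\Phi]$: applying the Markov property \cref{eqn:time-hom-m-p} with $H = \ind_L$ gives ${\sf E}_\mu[\ind_L \circ \theta^n \mid \mathcal{F}_n^\Phi] = {\sf E}_{\delta_{\Phi_n}}[\ind_L] = f(\Phi_n)$ almost surely, and shift-invariance is exactly what makes the left-hand side usable, since $\ind_L \circ \theta^n = \ind_{\theta^{-n}L} = \ind_L$ by \cref{eqn:shift-invariant}, so that left-hand side equals ${\sf E}_\mu[\ind_L \mid \mathcal{F}_n^\Phi]$. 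Thus $(M_n)_n$ is a bounded martingale for $(\mathcal{F}_n^\Phi)_n$.

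Since $\mathcal{F}_n^\Phi = \sigma(\Phi_0,\dots,\Phi_n)$ increases to $\sigma(\Phi_0,\Phi_1,\dots) = \mathcal{F}$ on the coordinate space $\Omega = S^\omega$, and $L \in \mathcal{F}$, Lévy's upward martingale convergence theorem yields $M_n \to {\sf E}_\mu[\ind_L \mid \mathcal{F}] = \ind_L$ almost surely $[{\sf P}_\mu]$. This limit is the engine of the proof, and the two required inclusions follow by inspecting it separately on $L$ and on $L^c$.

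For $\{{\bf \Phi}\in L\} \subseteq \{\inf_n f(\Phi_n) > 0\}$ up to a ${\sf P}_\mu$-null set: on $L$ we have $M_n \to 1$, so almost surely on $L$ only finitely many $M_n$ can be $\le 1/2$. To rule out that one of those finitely many terms vanishes, I note that $\{M_n = 0\} \in \mathcal{F}_n^\Phi$ for each $n$, so ${\sf P}_\mu(L \cap \{M_n=0\}) = {\sf E}_\mu[\ind_{\{M_n=0\}}\,{\sf E}_\mu[\ind_L\mid\mathcal{F}_n^\Phi]] = {\sf E}_\mu[\ind_{\{M_n=0\}}M_n] = 0$; a countable union over $n$ then shows that almost surely on $L$ the infimum $\inf_n M_n$ is the minimum of finitely many strictly positive numbers, hence strictly positive. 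For the reverse inclusion $\{\inf_n f(\Phi_n) > 0\} \subseteq \{{\bf \Phi}\in L\}$ up to a null set: on $L^c$ the same limit gives $M_n \to 0$, and since each $M_n \ge 0$ this forces $\inf_n M_n = 0$, so $L^c$ is almost surely contained in $\{\inf_n f(\Phi_n) = 0\}$. Combining the two inclusions gives ${\sf P}_\mu({\bf \Phi} \in L) = {\sf P}_\mu(\inf_n f(\Phi_n) > 0)$, which is \cref{eqn:language-levy}.

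I expect the delicate point to be the very first step: justifying rigorously that $f(\Phi_n)$ is a version of ${\sf E}_\mu[\ind_L \mid \mathcal{F}_n^\Phi]$. This needs the measurability of $s \mapsto {\sf P}_{\delta_s}({\bf \Phi} \in L)$ (available from the construction underlying \cref{thm:meas-spec}) and the application of \cref{eqn:time-hom-m-p} to the bounded random variable $H = \ind_L$, and it is here that shift-invariance is indispensable: without $\theta^{-n}L = L$ the Markov property would only describe the conditional law of the shifted event $\theta^{-n}L$, yielding no martingale and no convergence to $\ind_L$. Everything downstream is a routine application of Lévy's theorem together with the elementary fact that a non-negative real sequence converging to $1$ (resp.\ to $0$) has strictly positive (resp.\ zero) infimum once its finitely many small terms are known to be nonzero.
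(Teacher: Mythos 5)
Your proof is correct, and its skeleton is the same as the paper's: identify $M_n = {\sf P}_{\delta_{\Phi_n}}(\mathbf{\Phi}\in L)$ with a version of ${\sf E}_\mu[\ind_L\mid\mathcal{F}_n^\Phi]$ via the Markov property \cref{eqn:time-hom-m-p} plus shift-invariance (this is exactly \cref{eqn:prob-is-cexp} and \cref{lem:martingale-prob-sequence}), then invoke L\'evy's upward theorem to get $M_n\to\ind_L$ a.s. Where you genuinely diverge is the endgame. The paper proves that $0$ is an absorbing value for the non-negative martingale: it shows ${\sf P}_\mu(M_n=0\land M_{n+1}>0)=0$ by integrating $M_{n+1}$ over the $\mathcal{F}_n^\Phi$-measurable set $\{M_n=0\}$ (\cref{eqn:one-step-absorption}), propagates this by induction on the number of steps, and then combines the resulting dichotomy (``all $M_n>0$ or $\lim M_n=0$'') with the 0--1 law through a case analysis. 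You instead integrate $\ind_L$ itself over $\{M_n=0\}$, getting ${\sf P}_\mu(L\cap\{M_n=0\})={\sf E}_\mu[M_n\ind_{\{M_n=0\}}]=0$ directly, and a single countable union replaces the whole induction; convergence to $1$ then controls the tail of the sequence exactly as in the paper's \cref{eqn:replace-safe-with-inf}. Your route is shorter and proves only what is needed; the paper's route additionally establishes the absorption property of the martingale, which is the intuition advertised in \cref{ex:language-levy} but is not logically required for \cref{eqn:language-levy}. The one point you rightly flag --- measurability of $s\mapsto{\sf P}_{\delta_s}(\mathbf{\Phi}\in L)$ --- is handled in the paper by citation (Douc et al., Proposition 5.2.2(i)), so nothing is missing.
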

\begin{example}[Intuition for \cref{thm:language-levy}]\label{ex:language-levy}
Consider a Markov chain on the countable state space $S = \Nat$ as illustrated in \cref{fig:gambler}, defining a biased random walk that, at each time, increments the state with probability $0.51$, and otherwise decrements the state with probability $0.49$, unless it reaches the state 0, at which it remains thereafter.
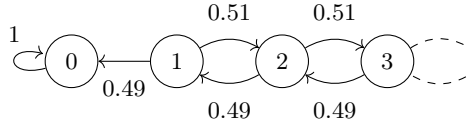
\begin{figure}[h]
\centering
    \begin{tikzpicture}[minimum size=7mm, node distance=14mm]
        \node[draw, circle] (0) {0};
        \node[draw, circle, right of=0] (1) {1};
        \node[draw, circle, right of=1] (2) {2};
        \node[draw, circle, right of=2] (3) {3};
        \node[right of=3] (4) {};
        
        \draw (0)  edge[->, loop left] node[above] {$1$} (0)
        (1) edge[->, bend left] node[above] {$0.51$} (2) 
        (2) edge[->, bend left] node[above] {$0.51$} (3)
        (3) edge[dashed, bend left] (4)
        (4) edge[dashed, bend left] (3);

        \draw 
        (3) edge[->, bend left] node[below] {$0.49$} (2)
        (2) edge[->, bend left] node[below] {$0.49$} (1)
        (1) edge[->] node[below] {$0.49$} (0);
    \end{tikzpicture}
    \caption{Gambler's Ruin.}\label{fig:gambler}
\end{figure}
Consider the event $L = \{ \sum_{n = 0}^{\infty} \ind_{ A }(\Phi_n) = \infty \}$,
which specifies that $A \in \Sigma$ is visited infinitely often (${\bf 1}_A$ denotes the indicator function of $A$).
Notably, this specification is shift-invariant because  
$\theta^{-1}L = \{ \sum_{n = 0}^{\infty} \ind_{A}(\Phi_{n+1}) = \infty \} = L$.
Suppose that $A = \{ 0 \}$. Then, for a state $x \in \Nat$, the probability that the Markov chain above satisfies $L$ corresponds to 
\begin{equation}
    \Prob_{\delta_x}(\mc \in L) = 
    \begin{cases}
        \left(49/51\right)^x &\text{if }x > 0,\\
        1 &\text{if }x = 0.
    \end{cases}
\end{equation}

Our main observation is that the expression $\Prob_{\delta_{\Phi_n}}(\mc \in L)$ defines a random variable on the probability space $(\Omega, {\cal F}, \Prob_\mu)$, which for shift-invariant properties is equal to the probability $\Prob_\mu(\mc \in L \mid {\cal F}_n^\Phi)$ of the system satisfying the specification, conditional on the information contained in the stochastic process up to time $n$. In fact, this random variable can be simulated in a computer program as we illustrate in \cref{fig:language-levy}, which shows 10 random simulations of the associated stochastic process under initial distribution $\mu = \delta_{10}$.
\begin{figure}[h]
    \begin{tikzpicture}
        \begin{axis}[xlabel={$n$}, ylabel={$\Prob_{\delta_{\Phi_n}}(\mc \in L)$},
            ymin=0, ymax=1,
            xmin=0, xmax=5001,
            xtick distance=1000,
            ytick distance=0.2,
            minor y tick num=1,
            scaled x ticks=false,
            xtick={0,1000,2000,3000,4000,5000},
            height=4cm, 
            width=\textwidth,
            ]
            \def\colorone{black!80}
            \def\colorzero{black!30}
            \def\colorbetween{black!55}
            \def\filename{examplecut.dat}

            \addplot [\colorone] table[x={n}, y={s1}] {\filename};
            \addplot [\colorone] table[x={n}, y={s2}] {\filename};
            \addplot [\colorone] table[x={n}, y={s3}] {\filename};
            \addplot [\colorzero] table[x={n}, y={s4}] {\filename};
            \addplot [\colorzero] table[x={n}, y={s5}] {\filename};
            \addplot [\colorone] table[x={n}, y={s6}] {\filename};
            \addplot [\colorzero] table[x={n}, y={s7}] {\filename};
            \addplot [\colorone] table[x={n}, y={s8}] {\filename};
            \addplot [\colorbetween] table[x={n}, y={s9}] {\filename};
            \addplot [\colorone] table[x={n}, y={s10}] {\filename};
        \end{axis}
    \end{tikzpicture}
    \caption{Random simulations of the satisfaction probability process of \cref{ex:language-levy}.}
    \label{fig:language-levy}
\end{figure}
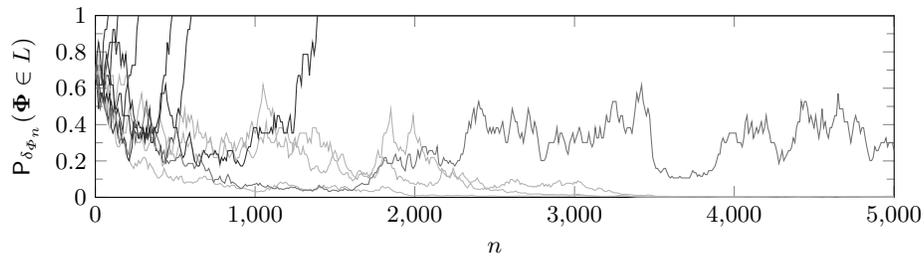

There are two central reasons why \cref{eqn:language-levy} holds. Firstly, the stochastic process $\Prob_{\delta_{\Phi_n}}(\mc \in L)$ is a non-negative martingale. This implies that, if the value of the stochastic process ever hits 0, it must remain at 0 at all times thereafter. 
Secondly, the process $\Prob_{\delta_{\Phi_n}}(\mc \in L)$ almost-surely converges to 1 when $\mc \in L$ and to 0 otherwise,
as a consequence of L{\'e}vy's 0-1 Law. 
Since $\Prob_{\delta_{\Phi_n}}(\mc \in L)$ converges to either 0 or 1 almost surely,
the probability of the variable $\Prob_{\delta_{\Phi_n}}(\mc \in L)$ converging to 1 corresponds to the probability of not converging to 0, and
since 0 is an absorbing value for a non-negative martingale,
this corresponds to the probability that its infimum is positive. \qed
\end{example}

A consequence of the relation between shift-invariant specifications and 
the random variables associated with their satisfaction probability is that, 
for every desired approximation error $\epsilon > 0$, 
we can always choose an appropriate level set of this random variable to define a sufficiently tight stochastic invariant.

\begin{theorem}\label{thm:approx-general}
    Suppose that $L \in {\cal F}$ is shift-invariant. Then, for every $\epsilon > 0$ there exists a measurable set $I \in \Sigma$ such that $\Prob_\mu({\bf \Phi} \in I^\omega \land {\bf \Phi} \notin L) = 0$ and
    \begin{equation}
        \Prob_\mu({\bf {\Phi}} \in L) - \epsilon \leq \Prob_\mu({\bf {\Phi}} \in I^\omega) \leq \Prob_\mu({\bf {\Phi}} \in L). 
        \label{eqn:approx-general}
    \end{equation}
\end{theorem}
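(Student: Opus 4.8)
The plan is to exploit Theorem \ref{thm:language-levy}, which identifies $\Prob_\mu({\bf \Phi} \in L)$ with $\Prob_\mu(\inf_n \Prob_{\delta_{\Phi_n}}({\bf \Phi} \in L) > 0)$, and then carve out the invariant $I$ as a superlevel set of the state function $s \mapsto \Prob_{\delta_s}({\bf \Phi} \in L)$. Concretely, for a threshold $c > 0$ to be fixed later, I would set
\[
  I_c = \{\, s \in S : \Prob_{\delta_s}({\bf \Phi} \in L) \geq c \,\},
\]
which is measurable because $s \mapsto \Prob_{\delta_s}({\bf \Phi} \in L)$ is measurable (this is part of the content of Theorem \ref{thm:meas-spec}, or can be argued from the measurability of the kernel). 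The event $\{{\bf \Phi} \in I_c^\omega\}$ says that $\Prob_{\delta_{\Phi_n}}({\bf \Phi} \in L) \geq c$ for every $n$, hence in particular $\inf_n \Prob_{\delta_{\Phi_n}}({\bf \Phi} \in L) \geq c > 0$; comparing with Theorem \ref{thm:language-levy} gives the upper bound $\Prob_\mu({\bf \Phi} \in I_c^\omega) \leq \Prob_\mu(\inf_n \Prob_{\delta_{\Phi_n}}({\bf \Phi} \in L) > 0) = \Prob_\mu({\bf \Phi} \in L)$, which is the right-hand inequality of \eqref{eqn:approx-general}, valid for every $c > 0$.

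For the lower bound I would again use Theorem \ref{thm:language-levy}: the events $\{\inf_n \Prob_{\delta_{\Phi_n}}({\bf \Phi} \in L) \geq 1/k\}$ increase to $\{\inf_n \Prob_{\delta_{\Phi_n}}({\bf \Phi} \in L) > 0\}$ as $k \to \infty$, so by continuity of measure from below there is $k$ (equivalently a threshold $c = 1/k$) with $\Prob_\mu(\inf_n \Prob_{\delta_{\Phi_n}}({\bf \Phi} \in L) \geq c) \geq \Prob_\mu({\bf \Phi} \in L) - \epsilon$. The point is that $\{\inf_n \Prob_{\delta_{\Phi_n}}({\bf \Phi} \in L) \geq c\} \subseteq \{{\bf \Phi} \in I_c^\omega\}$, since if the infimum over all $n$ is at least $c$ then every $\Phi_n$ lies in $I_c$; therefore $\Prob_\mu({\bf \Phi} \in I_c^\omega) \geq \Prob_\mu({\bf \Phi} \in L) - \epsilon$. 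Fixing this $c$ and letting $I = I_c$ settles \eqref{eqn:approx-general}.

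It remains to verify $\Prob_\mu({\bf \Phi} \in I^\omega \land {\bf \Phi} \notin L) = 0$, i.e.\ that staying forever in $I$ forces satisfaction of $L$ almost surely. This is where the martingale structure highlighted in \cref{ex:language-levy} does the work: $M_n := \Prob_{\delta_{\Phi_n}}({\bf \Phi} \in L)$ is a non-negative (bounded) martingale which, by L\'evy's $0$--$1$ law applied to the shift-invariant event $L$, converges almost surely to $\ind_L$. On the event $\{{\bf \Phi} \in I^\omega\}$ we have $M_n \geq c > 0$ for all $n$, so the limit is $\geq c > 0$, hence cannot equal $0$, hence equals $1$, i.e.\ ${\bf \Phi} \in L$. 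Thus $\{{\bf \Phi} \in I^\omega\} \setminus L$ is contained in the null set where the martingale convergence or the $0$--$1$ law fails, giving the claimed equality.

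I expect the main obstacle to be the careful justification that $M_n = \Prob_{\delta_{\Phi_n}}({\bf \Phi} \in L)$ is a version of the conditional expectation $\Prob_\mu({\bf \Phi} \in L \mid \mathcal{F}_n^\Phi)$ and hence a martingale converging a.s.\ to $\ind_L$ — this uses shift-invariance $\theta^{-1}L = L$ together with the Markov property \eqref{eqn:time-hom-m-p} (taking $H = \ind_L$, so that $\ind_L = \ind_L \circ \theta^n$ and $\Expect_\mu[\ind_L \circ \theta^n \mid \mathcal{F}_n^\Phi] = \Expect_{\delta_{\Phi_n}}[\ind_L] = \Prob_{\delta_{\Phi_n}}({\bf \Phi} \in L)$) and L\'evy's upward convergence theorem. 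Presumably this is exactly the argument already packaged inside the proof of Theorem \ref{thm:language-levy}, so in the write-up I would simply invoke that theorem and its proof ingredients rather than re-deriving them; the only genuinely new step beyond Theorem \ref{thm:language-levy} is the continuity-of-measure argument selecting the threshold $c$ in terms of $\epsilon$, which is routine.
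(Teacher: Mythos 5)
Your proposal is correct and follows essentially the same route as the paper: invoke \cref{thm:language-levy}, take $I$ to be a superlevel set of $s \mapsto \Prob_{\delta_s}(\mathbf{\Phi} \in L)$ at a threshold $1/k$ chosen by continuity of measure from below along the increasing union $\bigcup_k \{\inf_n \Prob_{\delta_{\Phi_n}}(\mathbf{\Phi} \in L) \geq 1/k\}$, and observe that $\{\mathbf{\Phi} \in I^\omega\}$ coincides with the $k$-th approximant. Your justification of $\Prob_\mu(\mathbf{\Phi} \in I^\omega \land \mathbf{\Phi} \notin L) = 0$ via L\'evy's $0$--$1$ law unpacks the same fact the paper gets by noting that the approximant is contained in $\{\inf_n \Prob_{\delta_{\Phi_n}}(\mathbf{\Phi} \in L) > 0\}$, which is $\Prob_\mu$-equivalent to $L$.
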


\begin{example}[The Gambler's Ruin]\label{ex:gambler}
    The Markov chain in \cref{ex:language-levy} corresponds to the Gambler's Ruin problem~\cite[p.~345]{FellerVol1}. 
    It is a classic result that if the process starts from $x > 0$, the probability of hitting any other value $y > x$---equivalent to 
    the probability of exiting the set $I = \{0, \dots, y-1\}$---is given by
    \begin{equation}
        \Prob_{\delta_x}(\mc \notin I^\omega) = \frac{1 -  (49/51)^x}{1 - (49/51)^{y}}. 
    \end{equation}
    It follows that the probability of avoiding $y$ converges asymptotically, for increasing $y$, 
    to the probability of hitting 0:
    \begin{equation}
        \lim_{y \to \infty} \underbrace{1 - \frac{1-(49/51)^x}{1-(49/51)^{y}}}_{\Prob_{\delta_x}(\mc \in I^\omega)} = 
        \underbrace{\left( 49/51 \right)^x}_{\Prob_{\delta_x}(\mc \in L)}.
    \end{equation}
    This shows that, for every $\epsilon > 0$, there exists a sufficiently large $y$ such that $I$ satisfies \cref{eqn:approx-general} with $\mu = \delta_x$. 
    Moreover, for every $y$ the event of eventually hitting either 0 or $y$ has probability 1; in other words $\Prob_{\delta_x}({\bf \Phi} \in I^\omega \land {\bf \Phi} \notin L) = 0$. \qed
\end{example}

We further demonstrate that, for finite systems, a stochastic invariant that exactly quantifies the probability of the specification always exists.
\begin{theorem}\label{thm:exact-finite}
    Suppose that $L \in {\cal F}$ is shift-invariant and $S$ 
    is finite. Then, there exists a measurable set $I \in \Sigma$ such that $\Prob_\mu({\bf \Phi} \in I^\omega \land {\bf \Phi} \notin L) = 0$ and
    \begin{equation}
        \Prob_\mu({\bf {\Phi}} \in I^\omega) = \Prob_\mu({\bf {\Phi}} \in L). \label{eqn:exact-finite}
    \end{equation}
\end{theorem}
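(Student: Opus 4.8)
The plan is to build $I$ directly from the satisfaction-probability function and apply \cref{thm:language-levy}, using finiteness of $S$ to turn the infimum condition appearing there into a plain membership condition. Define $f \colon S \to [0,1]$ by $f(s) = \Prob_{\delta_s}(\mathbf{\Phi} \in L)$; this is $\Sigma$-measurable (trivially so, since for finite $S$ we may take $\Sigma = 2^S$) and, $S$ being finite, it attains only finitely many values. Set $I = \{ s \in S : f(s) > 0 \} \in \Sigma$. If $I = \emptyset$ then $f \equiv 0$, so \cref{thm:language-levy} gives $\Prob_\mu(\mathbf{\Phi} \in L) = 0 = \Prob_\mu(\mathbf{\Phi} \in I^\omega)$ and there is nothing to prove; otherwise put $c = \min_{s \in I} f(s)$, which is strictly positive exactly because the range of $f$ is finite.

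The crux is the identity $\{ \mathbf{\Phi} \in I^\omega \} = \{ \inf_n f(\Phi_n) > 0 \}$, holding as events, not merely up to null sets: if every $\Phi_n$ lies in $I$ then $f(\Phi_n) \geq c$ for all $n$, whence $\inf_n f(\Phi_n) \geq c > 0$; conversely $\inf_n f(\Phi_n) > 0$ forces each $f(\Phi_n) > 0$, i.e.\ each $\Phi_n \in I$. Finiteness is indispensable here -- over an infinite state space the positive values of $f$ might accumulate at $0$, which is precisely why \cref{thm:approx-general} settles only for an $\epsilon$-approximation. Plugging this identity into \cref{thm:language-levy} yields at once
\[
    \Prob_\mu(\mathbf{\Phi} \in L)
    = \Prob_\mu\bigl( \inf_n f(\Phi_n) > 0 \bigr)
    = \Prob_\mu(\mathbf{\Phi} \in I^\omega),
\]
which is \eqref{eqn:exact-finite}.

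It remains to establish $\Prob_\mu(\mathbf{\Phi} \in I^\omega \wedge \mathbf{\Phi} \notin L) = 0$. First I would note that, applying the Markov property~\eqref{eqn:time-hom-m-p} with $H = \ind_L$ and using shift-invariance in the form $\ind_L = \ind_L \circ \theta^n$, one obtains $\Prob_\mu(\mathbf{\Phi} \in L \mid \mathcal{F}_n^\Phi) = f(\Phi_n)$ almost surely for every $n$. Since $\{ f(\Phi_n) = 0 \} \in \mathcal{F}_n^\Phi$, this gives $\Prob_\mu(\mathbf{\Phi} \in L \wedge f(\Phi_n) = 0) = \Expect_\mu\bigl[ \ind_{\{ f(\Phi_n) = 0 \}}\, f(\Phi_n) \bigr] = 0$, and a union bound over $n$ shows $\Prob_\mu(\mathbf{\Phi} \in L \wedge \mathbf{\Phi} \notin I^\omega) = 0$; a short inclusion--exclusion with the equality $\Prob_\mu(\mathbf{\Phi} \in I^\omega) = \Prob_\mu(\mathbf{\Phi} \in L)$ from the previous paragraph then gives $\Prob_\mu(\mathbf{\Phi} \in I^\omega \wedge \mathbf{\Phi} \notin L) = 0$, as required. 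Alternatively, L{\'e}vy's $0$--$1$ law gives $f(\Phi_n) = \Prob_\mu(\mathbf{\Phi} \in L \mid \mathcal{F}_n^\Phi) \to \ind_L$ a.s., while on $\{\mathbf{\Phi} \in I^\omega\}$ the left side stays $\geq c$, forcing $\ind_L = 1$ there a.s.

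The equality \eqref{eqn:exact-finite} is essentially free once finiteness collapses the infimum test to a membership test, so I expect the real work to be the null condition: it rests on identifying $f(\Phi_n)$ with the conditional-probability martingale $\Prob_\mu(\mathbf{\Phi} \in L \mid \mathcal{F}_n^\Phi)$ and then exploiting that a non-negative martingale cannot stay bounded away from $0$ on an event on which it must vanish in the limit. Minor care is also needed for the measurability of $f$ and for the degenerate case $I = \emptyset$, both dispatched above.
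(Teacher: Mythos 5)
Your proof is correct and follows essentially the same route as the paper: both define $I$ via the positivity of $s \mapsto \Prob_{\delta_s}(\mathbf{\Phi} \in L)$ (your $\{f > 0\}$ coincides with the paper's $\{f \geq \min_{f>0} f\}$ for finite $S$), establish the exact event identity $\{\mathbf{\Phi} \in I^\omega\} = \{\inf_n f(\Phi_n) > 0\}$ using that the minimum positive value is bounded away from zero, and conclude via \cref{thm:language-levy}. Your explicit martingale argument for the null condition $\Prob_\mu(\mathbf{\Phi} \in I^\omega \wedge \mathbf{\Phi} \notin L) = 0$ is a welcome addition that the paper leaves implicit (it follows there from the fact that $\{\inf_n f(\Phi_n) > 0\}$ and $\{\mathbf{\Phi} \in L\}$ are shown $\Prob_\mu$-equivalent in the proof of \cref{thm:language-levy}).
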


\begin{example}
    Assume that the Markov chain in \cref{ex:language-levy} has an upper bound $N > 0$ that is a sink state, making its state space finite. 
    Then, the event of hitting 0---which is the event $L$---corresponds exactly to the event of avoiding $N$---which is the event $I^\omega$ 
    with $I = \{0, \dots, N - 1\}$. Since the two events are equivalent, their probabilities are as well, satisfying \cref{eqn:exact-finite}. \qed
\end{example}

\begin{remark}[Existence of Value Functions]
    Our converse theorems establish the existence of invariant regions $I \in \Sigma$. 
    This implies the existence of appropriate value functions, which can be
    defined as $V_0(s) = \Prob_{\delta_s}(\mc \notin I^\omega)$. 
    These are necessarily measurable and are guaranteed to satisfy \cref{eqn:invar-noninc,eqn:invar-one}.\qed
\end{remark}

\section{Quantitative Supermartingale Certificates}

We propose a general methodology for the formalisation of proof rules to 
establish probability bounds for a broad variety of specifications. 
We show that the problem of computing lower bounds for the probability of satisfaction 
of shift-invariant specifications can be decomposed into two problems: 
computing a stochastic invariant alongside a lower bound for its probability, and 
deciding the almost-sure satisfaction of the specification conditional to the stochastic invariant. 

\begin{theorem}\label{thm:decomp}
    Suppose that $L \in {\cal F}$ is shift-invariant and, 
    for some probability bound $p \in (0, 1]$ and measurable set $I \in \Sigma$, the following two conditions hold:
    \begin{align}
        &{\sf P}_\mu(\mathbf{\Phi} \in I^\omega) \geq p, \label{eqn:pee} \\
        &{\sf P}_\mu (\mathbf{\Phi} \in L \mid \mathbf{\Phi} \in I^\omega) = 1. \label{eqn:one}
    \end{align}
    Then, ${\sf P}_\mu(\mathbf{\Phi} \in L) \geq p$.
\end{theorem}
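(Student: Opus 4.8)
The plan is to obtain the bound directly from the two hypotheses by elementary measure-theoretic reasoning: shift-invariance of $L$ is not actually needed in this direction, and the argument works verbatim for any measurable $L \in \mathcal{F}$. The first step is to observe that \eqref{eqn:pee} forces $\Prob_\mu(\mathbf{\Phi} \in I^\omega) \geq p > 0$, so the conditioning event $\{\mathbf{\Phi} \in I^\omega\}$ has strictly positive probability and the conditional probability appearing in \eqref{eqn:one} is the ordinary ratio, i.e.\ $\Prob_\mu(\mathbf{\Phi} \in L \mid \mathbf{\Phi} \in I^\omega) = \Prob_\mu(\mathbf{\Phi} \in L \land \mathbf{\Phi} \in I^\omega) / \Prob_\mu(\mathbf{\Phi} \in I^\omega)$.

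Combining this identity with \eqref{eqn:one} gives $\Prob_\mu(\mathbf{\Phi} \in L \land \mathbf{\Phi} \in I^\omega) = \Prob_\mu(\mathbf{\Phi} \in I^\omega)$; equivalently, $\Prob_\mu(\mathbf{\Phi} \in I^\omega \land \mathbf{\Phi} \notin L) = 0$, so staying in $I$ forever entails satisfying $L$ outside a $\Prob_\mu$-null set. The final step is monotonicity of $\Prob_\mu$ under the inclusion $\{\mathbf{\Phi} \in L \land \mathbf{\Phi} \in I^\omega\} \subseteq \{\mathbf{\Phi} \in L\}$, which yields
\[
\Prob_\mu(\mathbf{\Phi} \in L) \;\geq\; \Prob_\mu(\mathbf{\Phi} \in L \land \mathbf{\Phi} \in I^\omega) \;=\; \Prob_\mu(\mathbf{\Phi} \in I^\omega) \;\geq\; p,
\]
which is the claim.

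I do not expect any genuine obstacle here. The only point needing a moment of care is the well-definedness of the conditional probability in \eqref{eqn:one}, which is exactly what the requirement $p > 0$ secures (for $p = 0$ the statement would be vacuous in any case). The mathematical weight of the methodology sits elsewhere: in \cref{thm:approx-general} and \cref{thm:exact-finite}, which guarantee the existence of a set $I$ meeting \eqref{eqn:pee} and \eqref{eqn:one} with $p$ arbitrarily close to (respectively, equal to) $\Prob_\mu(\mathbf{\Phi} \in L)$, and in the supermartingale certificates that discharge the almost-sure condition \eqref{eqn:one} and the lower bound \eqref{eqn:pee} in practice.
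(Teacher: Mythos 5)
Your proposal is correct and follows essentially the same route as the paper: both expand the conditional probability in \cref{eqn:one} as a ratio (legitimate since $p>0$), deduce $\Prob_\mu(\mathbf{\Phi}\in L \land \mathbf{\Phi}\in I^\omega)=\Prob_\mu(\mathbf{\Phi}\in I^\omega)$, and conclude by dropping a non-negative term (the paper phrases this via the law of total probability, you via monotonicity of the measure, which is the same step). Your side remark that shift-invariance of $L$ is not used in this direction is accurate and consistent with the paper's proof.
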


This result enables the extension of every supermartingale certificate proof rule for almost-sure satisfaction, conditional 
to a deterministic invariant, towards a quantitative proof rule for the same specification. 
Specifically, our proof rule for stochastic invariants presented in \cref{thm:invar} provides the appropriate 
constraints for the formalisation of quantitative supermartingale certificates.
\begin{example}[{\protect A Proof Rule for Quantitative Termination \cite[Theorem 4]{DBLP:conf/cav/ChatterjeeGMZ22}}]\label{example:quant-reach-rule}
    Using our methodology, we formalise a supermartingale certificate proof rule for the quantitative finite-time 
    termination of probabilistic programs. For the set of terminal states $A \in \Sigma$, which are assumed to be sink states, this corresponds to determining the probability of $L = \{ \sum_{n=0}^\infty {\bf 1}_{A}(\Phi_n) = \infty \}$, which is shift-invariant.
    We combine the proof rule for ranking supermartingales~\cite[Definition 9]{DBLP:conf/cav/ChakarovS13} (cf. \cref{eqn:ranking})---which proves almost-sure termination in expected finite time---with \cref{thm:invar}, and obtain the following (known) proof rule:
    \begin{align}
        \forall s \in I \colon &PV_0(s) \leq V_0(s),\label{eqn:reach-drift}\\
        \forall s \notin I \colon &V_0(s) \geq 1,\label{eqn:reach-one}\\
        \forall s \in I \setminus A \colon & PV_1(s) \leq V_1(s) - \varepsilon. \label{eqn:ranking}
    \end{align}
    Here, any region $I \in \Sigma$, non-negative value functions $V_0,V_1\colon S\to\Real_{\geq 0}$, and 
    positive constant $\varepsilon > 0$ constitute a quantitative supermartingale certificate where $1 - \mu V_0$ is a lower bound 
    upon the probability of hitting target $A$.

    Consider the quantitative verification problem developed in \cref{ex:language-levy,ex:gambler}, 
    which corresponds to the termination question with terminal state $A = \{ 0 \}$. A valid supermartingale certificate is given by the following components:
    \begin{equation}
        V_0(x) = \frac{1 -  (49/51)^x}{1 - (49/51)^{y}},\quad V_1(x) = y-x,\quad I = \{0, \dots, y-1\},\quad \varepsilon = 0.02,\label{eq:term-cert}
    \end{equation}
    where $y \in \Nat$ is any value larger than the initial state. For initial state 10, 
    the true probability is approximately $0.6703 \approx (49/51)^{10}$. 
    With $y = 50$, we obtain bound $1 - V_0(10) \approx 0.62$; 
    with $y = 100$, we obtain the tighter bound $1 - V_0(10) \approx 0.66$; 
    with $y = 200$, we obtain the much tighter bound $1 - V_0(10) \approx 0.6702$. 
    Notably, the true probability $(49/51)^x$ would violate \cref{eqn:reach-one}, and in this example
    a bounded invariant is essential to construct a ranking supermartingale $V_1$. \qed
\end{example}

Our converse results presented in \cref{thm:approx-general,thm:exact-finite} guarantee that our methodology yields complete certificates
up to arbitrary approximation for systems with general state space, and complete certificates for finite systems.
\begin{theorem}[$\epsilon$-Completeness for General Markov Chains]\label{thm:epsilon-complete}
    Suppose that $L \in {\cal F}$ is shift-invariant. Then, for every arbitrary $\epsilon > 0$, 
    there exists a measurable set $I \in \Sigma$ such that 
    \cref{eqn:pee,eqn:one} hold with $p = {\sf P}_\mu(\mathbf{\Phi} \in L) - \epsilon$.
\end{theorem}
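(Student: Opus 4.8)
The plan is to read off the required invariant directly from the converse result \cref{thm:approx-general}, which already produces, for any prescribed error, a measurable region that is simultaneously tight and almost-surely contained in $L$. First I would fix $\epsilon > 0$ and treat the principal case $\epsilon < \Prob_\mu(\mathbf{\Phi} \in L)$, so that $p = \Prob_\mu(\mathbf{\Phi} \in L) - \epsilon$ lies in $(0,1]$; the complementary regime $p \leq 0$ is degenerate, since then \cref{eqn:pee} holds for every $I$ and one only needs to exhibit some $I$ satisfying the almost-sure containment, which \cref{thm:approx-general} still provides. Applying \cref{thm:approx-general} to $\epsilon$ yields a measurable set $I \in \Sigma$ with
\begin{equation*}
    \Prob_\mu(\mathbf{\Phi} \in I^\omega \land \mathbf{\Phi} \notin L) = 0
    \qquad\text{and}\qquad
    \Prob_\mu(\mathbf{\Phi} \in L) - \epsilon \leq \Prob_\mu(\mathbf{\Phi} \in I^\omega) \leq \Prob_\mu(\mathbf{\Phi} \in L).
\end{equation*}
The left inequality is exactly \cref{eqn:pee} for the chosen $p$, so that half is immediate.

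It then remains to convert the null-event identity into \cref{eqn:one}. Because $\Prob_\mu(\mathbf{\Phi} \in I^\omega) \geq p > 0$, the conditioning event has positive probability and $\Prob_\mu(\mathbf{\Phi} \in L \mid \mathbf{\Phi} \in I^\omega)$ is well defined. Writing $\{\mathbf{\Phi} \in I^\omega\}$ as the disjoint union of $\{\mathbf{\Phi} \in I^\omega\} \cap \{\mathbf{\Phi} \in L\}$ and $\{\mathbf{\Phi} \in I^\omega\} \cap \{\mathbf{\Phi} \notin L\}$, the second piece has probability zero, so $\Prob_\mu(\mathbf{\Phi} \in I^\omega \land \mathbf{\Phi} \in L) = \Prob_\mu(\mathbf{\Phi} \in I^\omega)$ and the ratio defining the conditional probability equals $1$. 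Since $\epsilon > 0$ was arbitrary, this $I$ witnesses the statement.

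I do not anticipate a genuine obstacle. All the analytic content---the non-negative martingale structure of $\Prob_{\delta_{\Phi_n}}(\mathbf{\Phi} \in L)$ and the L\'evy $0$--$1$ law behind \cref{thm:language-levy}, together with the level-set construction behind \cref{thm:approx-general}---has already been established, and what is left is elementary measure-theoretic bookkeeping plus the one-line case split above. The only point deserving care is ensuring the conditioning event in \cref{eqn:one} is non-null before dividing, which is precisely why isolating $p > 0$ as the principal case is convenient.
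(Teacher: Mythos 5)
Your proposal is correct and follows essentially the same route as the paper: invoke \cref{thm:approx-general} to obtain $I$, read off \cref{eqn:pee} from the left-hand inequality, and derive \cref{eqn:one} from the null-set condition $\Prob_\mu(\mathbf{\Phi} \in I^\omega \land \mathbf{\Phi} \notin L) = 0$ via the law of total probability and the definition of conditional probability. If anything, your explicit isolation of the case $p > 0$ before dividing by $\Prob_\mu(\mathbf{\Phi} \in I^\omega)$ is slightly more careful than the paper's own write-up.
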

\begin{theorem}[Completeness for Finite Markov Chains]
\label{thm:complete}
    Suppose that $L \in {\cal F}$ is shift-invariant and $S$ is finite. 
    Then, there exists a measurable set $I \in \Sigma$ such that \cref{eqn:pee,eqn:one} hold with $p = {\sf P}_\mu(\mathbf{\Phi} \in L)$.
\end{theorem}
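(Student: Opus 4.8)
The plan is to obtain \cref{thm:complete} as an immediate consequence of the exact converse result \cref{thm:exact-finite}, which already supplies, for a finite-state chain and a shift-invariant specification, an invariant region whose escape behaviour captures the satisfaction probability of $L$ exactly. Concretely, I would fix the shift-invariant $L$ and the finite $S$, apply \cref{thm:exact-finite}, and take the resulting measurable set $I \in \Sigma$, which by that theorem satisfies both $\Prob_\mu(\mathbf{\Phi} \in I^\omega \land \mathbf{\Phi} \notin L) = 0$ and $\Prob_\mu(\mathbf{\Phi} \in I^\omega) = \Prob_\mu(\mathbf{\Phi} \in L)$. It then remains only to verify that this $I$ witnesses conditions \cref{eqn:pee} and \cref{eqn:one} with the optimal bound $p = \Prob_\mu(\mathbf{\Phi} \in L)$.

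For \cref{eqn:pee} there is nothing to do: the second identity above gives $\Prob_\mu(\mathbf{\Phi} \in I^\omega) = p$, and in particular $\Prob_\mu(\mathbf{\Phi} \in I^\omega) \ge p$. For \cref{eqn:one}, I would decompose the event $\{\mathbf{\Phi} \in I^\omega\}$ according to whether $\mathbf{\Phi} \in L$: it is the disjoint union of $\{\mathbf{\Phi} \in I^\omega\} \cap \{\mathbf{\Phi} \in L\}$ and $\{\mathbf{\Phi} \in I^\omega\} \cap \{\mathbf{\Phi} \notin L\}$, and the first conclusion of \cref{thm:exact-finite} says the second piece is $\Prob_\mu$-null. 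Hence $\Prob_\mu(\mathbf{\Phi} \in I^\omega \land \mathbf{\Phi} \in L) = \Prob_\mu(\mathbf{\Phi} \in I^\omega)$, so that
\[
\Prob_\mu(\mathbf{\Phi} \in L \mid \mathbf{\Phi} \in I^\omega) = \frac{\Prob_\mu(\mathbf{\Phi} \in I^\omega \land \mathbf{\Phi} \in L)}{\Prob_\mu(\mathbf{\Phi} \in I^\omega)} = 1,
\]
which is exactly \cref{eqn:one}, provided $\Prob_\mu(\mathbf{\Phi} \in I^\omega) > 0$.

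The only delicate point --- and the closest thing to an obstacle in what is otherwise a repackaging argument --- is the degenerate situation $\Prob_\mu(\mathbf{\Phi} \in L) = 0$, where the conditioning event $\{\mathbf{\Phi} \in I^\omega\}$ is null and the conditional probability in \cref{eqn:one} has no elementary meaning. I would handle this by observing that the decomposition result \cref{thm:decomp} is invoked only for $p \in (0,1]$, so the case $\Prob_\mu(\mathbf{\Phi} \in L) = 0$ is vacuous: the claimed lower bound $p = 0$ is trivially valid and requires no certificate; alternatively, one may adopt the standard convention that an assertion conditioned on a null event is vacuously true, under which the stated $I$ still works. All the substantive content --- the actual construction of an invariant whose escape probability equals $\Prob_\mu(\mathbf{\Phi} \in L)$ --- is already discharged inside the proof of \cref{thm:exact-finite}, so beyond the bookkeeping above there is nothing further to prove.
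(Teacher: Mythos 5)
Your proposal is correct and follows essentially the same route as the paper: invoke \cref{thm:exact-finite} to obtain $I$, read off \cref{eqn:pee} from the exact equality of probabilities, and derive \cref{eqn:one} from the null-set condition via the law of total probability. Your explicit treatment of the degenerate case $\Prob_\mu(\mathbf{\Phi}\in L)=0$ is a small improvement over the paper's proof, which divides by $\Prob_\mu(\mathbf{\Phi}\in I^\omega)$ without comment.
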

\begin{remark}\label{rmk:relative}
    Composing stochastic invariants and almost-sure certificates, as described in \cref{thm:decomp}, 
    results in complete proof rules for probabilistic lower bounds under the assumption that the proof rule for conditional almost-sure satisfaction is complete. In other words, all completeness guarantees of the proof rule for almost-sure satisfaction carry over to their quantitative extension, up to approximation or exactly, as described in \cref{thm:epsilon-complete,thm:complete} respectively. \qed
\end{remark}

Our methodology generalises and unifies existing proof rules for quantitative model checking and control synthesis, 
while providing the foundation for formalising quantitative supermartingale certificates for new specifications and objectives. 
It applies to the rich class of shift-invariant specifications, which includes and extends beyond a broad variety of special cases.
This includes specifications defined as limits\cite[Lemma 5.1.6]{douc2018markov}, such as the limit objectives on cost and reward considered in reinforcement learning, and asymptotic stability considered in control theory, all of which are also tail events (cf.~\cref{rmk:tail}). 
Moreover, it also includes B\"uchi, co-B\"uchi, Rabin, Streett, Muller, and parity acceptance conditions of automata over infinite words~\cite{DBLP:journals/tcs/Chatterjee07}. 
As we demonstrate, this enables in particular the development of quantitative supermartingale certificates for $\omega$-regular specifications.

\section{Quantitative $\omega$-Regular Verification and Control}

We present the first quantitative supermartingale certificate for $\omega$-regular specifications, 
which we obtain as a result of \cref{thm:invar,thm:decomp} 
and the supermartingale certificate for the almost-sure acceptance of Streett conditions~\cite{DBLP:conf/cav/AbateGR24}. 

An $\omega$-regular specification (or language) over a finite set of atomic propositions $\Pi$, 
which we define as predicates over the state space of the system under analysis, corresponds to the language
of an $\omega$-regular expression
whose alphabet is the Boolean truth valuations of $\Pi$.
An important class of $\omega$-regular specifications is the temporal behaviour described using linear temporal logic (LTL).
An LTL formula $\varphi$ extends propositional logic (over the atomic propositions $\Pi$) with the temporal  
{\em next} operator ${\sf X}\varphi$, indicating that $\varphi$ holds after one step in the future, 
the {\em eventually} operator ${\sf F}\varphi$, indicating that $\varphi$ holds at some point in the future, 
the {\em always} operator ${\sf G} \varphi$, indicating that $\varphi$ holds at all times in the future, 
and the {\em until} operator $\varphi {\sf U} \psi$, indicating that $\varphi$ holds at all times in the future before $\psi$, which in turn holds at some point in the future~\cite{DBLP:conf/focs/Pnueli77}. 

We treat the problem of determining the probability of satisfying an $\omega$-regular specification over $\Pi$ 
for a system under analysis 
whose semantics is a time-homogeneous Markov chain $\hat{\mc}$ 
with general state space 
$(\hat{S}, \hat{\Sigma})$, initial probability measure $\hat{\mu}$ and transition probability kernel $\hat{P}$.
The problem is defined in terms of 
a measurable labelling function $\llangle \cdot \rrangle \colon \hat{S} \to {\cal P}(\Pi)$ where
$\llangle s \rrangle \subseteq \Pi$ indicates the set of atomic propositions  
that hold true in state $s \in \hat{S}$, which we call the labelling of $s$, 
and interpret the $\omega$-regular specification according to its usual semantics 
over the set of traces ${\cal P}(\Pi)^\omega$. 
Notably, $\omega$-regular specifications lack shift invariance.
For example, the LTL formula $\varphi = {\sf F}a$, defining the event
${L}_{\varphi} = \{ \exists n \in \Nat \colon a \in \llangle \Phi_n \rrangle\}$, is not invariant to time shift (cf. \cref{rmk:liveness}).

Automata over infinite words reduce $\omega$-regular specifications to equivalent acceptance conditions that are shift-invariant, by extending the state space with additional memory which is given by the states of an $\omega$-automaton.
Büchi automata are the canonical example, but they require the presence of 
non-determinism, with which standard probability theory is limited.
Conversely, automata with Muller, Rabin, parity, 
and Streett acceptance conditions recognise   $\omega$-regular languages in their deterministic form~\cite{DBLP:conf/dagstuhl/2001automata}, which preserves the probabilistic nature of the system.
We consider the case of Streett automata, 
and generalise the existing supermartingale certificates for their almost-sure acceptance (from the literature~\cite{DBLP:conf/cav/AbateGR24}) 
to additionally produce lower and upper probability bounds for $\omega$-regular specifications. 
\begin{definition}[Deterministic Streett Automata]
A deterministic Streett automaton (DSA) over the finite set of propositions $\Pi$ consists of 
a finite set of states $Q$,  
an initial state $q_0 \in Q$, 
a transition function $T \colon Q \times {\cal P}(\Pi) \to Q $, 
and an acceptance condition $(F_1, G_1), \dots, (F_k,G_k)$ where $F_i, G_i \subseteq Q$ for $i = 1, \dots k$.
An infinite input trace $(p_0, p_1, p_2, \dots ) \in {\cal P}(\Pi)^\omega$ is accepted if there exists 
an infinite run $(q_0, q_1, q_2, \dots) \in Q^\omega$ such that $q_{n+1} = T(q_n, p_n)$ for every $n \in \Nat$ and,  
for every $i = 1, \dots, k$, either $\sum_{n=0}^\infty {\bf 1}_{F_i}(q_n) < \infty$ or $\sum_{n=0}^\infty {\bf 1}_{G_i}(q_n) = \infty$.
\end{definition}

There are multiple algorithms for the automatic construction of DSA,
in particular from LTL formulae~\cite{DBLP:conf/cav/KretinskyMSZ18,DBLP:conf/cav/Duret-LutzRCRAS22}. 
Given a DSA, the original $\omega$-regular verification 
question reduces to a question of Streett acceptance over the synchronous composition between the system under analysis $\hat{\mc}$ and the automaton. The synchronous composition is a Markov chain over state space $S = \hat{S} \times Q$ 
with the $\sigma$-algebra $\Sigma = \hat{\Sigma} \otimes {\cal P}(Q)$,  whose transition probability kernel $P \colon S \times \Sigma \to [0,1]$ and initial probability measure $\mu \colon \Sigma \to [0,1]$ 
are defined as follows:
\begin{align}
    P
    ((s, q), A)
    &= \int_{(u,r) \in A}
    \hat{P}(s, {\rm d}u)
    \cdot 
    \mathbf{1}_{\{r\}}(T(q, \llangle s \rrangle)), \\
    \mu(A) &= \int_{(u,r) \in A} \hat{\mu}({\rm d}u) \cdot {\bf 1}_{\{ r \}}(q_0).
\end{align}
This is associated with the Streett acceptance condition $(A_1,B_1) \in \Sigma^2, \dots, (A_k,\allowbreak B_k) \in \Sigma^2$ 
defined as $A_i = \hat{S} \times F_i, B_i = \hat{S} \times G_i$ for $i = 1, \dots k$.
\begin{remark}[Streett Acceptance is Shift-Invariant]
    As we establish in \cref{ex:language-levy}, the B\"uchi acceptance condition $\{ \sum_{n = 0}^{\infty} \ind_{ A }(\Phi_n) = \infty \}$ 
    is shift-invariant. We note that shift-invariant events are closed under countable Boolean operations~\cite[Proposition 5.1.5]{douc2018markov}, 
    and that Streett acceptance corresponds to the event $\cap_{i=1}^k (\{ \sum_{n = 0}^{\infty} \ind_{ A_i }(\Phi_n) = \infty \}^{\sf c} \cup \{ \sum_{n = 0}^{\infty} \ind_{ B_i }(\Phi_n) = \infty \})$.\qed
\end{remark}
\begin{figure}[h]
    \centering
    \begin{tikzpicture}[minimum size=7mm, node distance=14mm]
        \node[draw, circle] (0) {0};
        \node[draw, circle, right of=0] (1) {1};
        \node[draw, circle, right of=1] (2) {2};
        \node[right of=2] (3) {};
        
        \draw (0)  edge[->, bend left] node[above] {$0.51$} (1)
        (1) edge[->, bend left] node[above] {$0.51$} (2) 
        (2) edge[dashed, bend left] (3)
        (3) edge[dashed, bend left] (2)
        (2) edge[->, bend left] node[below] {$0.49$} (1)
        (1) edge[->, bend left] node[below] {$0.49$} (0)
        (0) edge[->, loop left] node[left] {0.49} (0);

        \node[state, right=14mm of 3] (q0) {$q_0$};
        \node[state, right=1cm of q0] (q1) {$q_1$};
        \draw (q0) edge[->, loop above] node[above] {$x \neq 0$} 
        (q0) edge[->] node[above] {$x = 0$} (q1)
        (q1) edge[->, loop above] node[above] {true} (q1)
        ($(q0)-(10mm,0)$) edge[->] (q0);

        \node at ($(q0)!0.5!(q1)-(0,1cm)$) {$F_1 = Q, G_1 =\{q_1\}$};
    \end{tikzpicture}
    \caption{A biased random walk over $\Nat$ and a DSA for the LTL specification ${\sf F}(x = 0)$.}
    \label{fig:reach-to-term}
\end{figure}
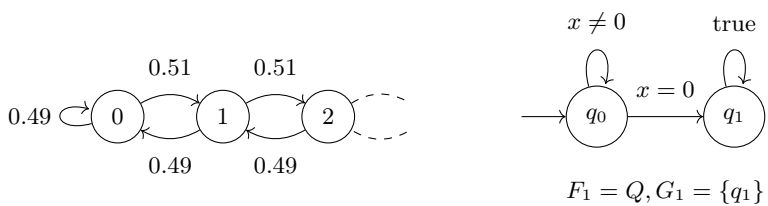
\begin{example}[Reachability as Recurrence]\label{ex:reachAsTerm}
Consider the biased random walk over state space $\hat{S} = \Nat$ 
illustrated in \cref{fig:reach-to-term}, which on all states increments with probability 0.51 and 
decrements with probability 0.49. 
Consider the LTL formula $\varphi = {\sf F}(x = 0)$, 
requiring that the process hits $0$ at least once. This is not shift-invariant (cf. \cref{rmk:liveness}). 
A DSA for this $\omega$-regular specification has two states $Q = \{ q_0, q_1 \}$ as depicted in \cref{fig:reach-to-term}, where $q_1$ is a sink state that is entered exactly when the random walk hits value 0. The acceptance condition of this automaton requires that $q_1$ is visited infinitely often---recurrence---which is shift invariant. 
Notably, their synchronous composition results in a Markov chain where every state $\{ q_1 \} \times \Nat$ essentially indicates that value $0$ 
has been hit at least once in the past. 
As a result, visiting $q_1$ infinitely often is equivalent to visiting 0 at least once,   
and has reduced our reachability question to an equivalent recurrence question. 
This is analogous to the termination problem developed in \cref{example:quant-reach-rule}, 
which in fact requires the terminal state 0 to be a recurrent sink state. \qed 
\end{example}

Our new (and the first) quantitative supermartingale certificate for $\omega$-regular specifications combines   
the proof rule for stochastic invariants in \cref{thm:invar}
with the following (known) proof rule for the almost-sure acceptance of Streett conditions over general state space.
\begin{theorem}[{\protect Streett Supermartingales~\cite{DBLP:conf/cav/AbateGR24}}]\label{thm:streett}
    Let $ (A_1, B_1) \in \Sigma^2, \dots,\allowbreak (A_k, \allowbreak B_k)\in \Sigma^2$ be a Streett acceptance condition. Suppose that $ \Prob_\mu(\mc \in I^\omega) > 0$ and 
    there exist $k$ measurable functions $V_1, \dots, V_k \colon S \to \Real_{\geq 0}$ such that
    \begin{align}
        &\forall s \in I \cap (A_i \setminus B_i) \colon 
    PV_i(s) \leq V_i(s) - \varepsilon,\label{eq:dec}\\
&\forall s \in I \cap B_i \colon
    PV_i(s) \leq V_i(s) + M,\label{eq:inc}\\
&\forall s \in I \setminus (A_i \cup B_i) \colon 
    PV_i(s) \leq V_i(s),\label{eq:noninc}
    \end{align}
    for some constants $\varepsilon, M > 0$. Then,
    \begin{equation}\label{eqn:conditional-1}
        \Prob_\mu \left( 
        \bigwedge_{i = 1}^k \sum_{n = 0}^{\infty} \ind_{A_i}(\Phi_n) < \infty \lor 
        \sum_{n = 0}^{\infty}\ind_{B_i}(\Phi_n) = \infty
        \mid \mc \in I^\omega \right) = 1. 
    \end{equation}
\end{theorem}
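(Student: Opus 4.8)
The plan is to prove the statement one index $i$ at a time, reducing it to a single non-negative supermartingale estimate. Fix $i \in \{1,\dots,k\}$, write $I_\infty = \{\mc \in I^\omega\}$, and let $E_i$ be the event $\{\sum_n \ind_{A_i}(\Phi_n) = \infty\} \cap \{\sum_n \ind_{B_i}(\Phi_n) < \infty\}$. By De Morgan's laws the union $E_1 \cup \dots \cup E_k$ is the complement of the event in \cref{eqn:conditional-1}, so, since $\Prob_\mu(I_\infty) > 0$ by hypothesis, it suffices to show $\Prob_\mu(E_i \cap I_\infty) = 0$ for every $i$; summing over $i$ and dividing by $\Prob_\mu(I_\infty)$ then yields \cref{eqn:conditional-1}.

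To bound $\Prob_\mu(E_i \cap I_\infty)$ I would localise in time at the last visit to $B_i$. Let $\tilde\gamma = \inf\{\, \ell \ge 0 : \Phi_\ell \in B_i \text{ or } \Phi_\ell \notin I \,\}$, so that $\{\tilde\gamma = \infty\} = (I \setminus B_i)^\omega$, and put $H = \{\tilde\gamma = \infty\} \cap \{\sum_n \ind_{A_i}(\Phi_n) = \infty\}$. Since the B\"uchi event $\{A_i \text{ infinitely often}\}$ is shift-invariant, $\theta^{-m}H = \{\Phi_\ell \in I \setminus B_i \text{ for all } \ell \ge m\} \cap \{A_i \text{ i.o.}\}$ for every $m \in \Nat$. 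On $E_i \cap I_\infty$ the chain stays in $I$ forever and visits $B_i$ only finitely often, so choosing $m$ past the last visit of $B_i$ places the trajectory in $\theta^{-m}H$; hence $E_i \cap I_\infty \subseteq \bigcup_{m \in \Nat} \theta^{-m}H$. Applying the Markov property \cref{eqn:time-hom-m-p} to the indicator of $H$ gives $\Prob_\mu(\theta^{-m}H) = \Expect_\mu[\Prob_{\delta_{\Phi_m}}(H)]$, so it is enough to prove $\Prob_{\delta_s}(H) = 0$ for every $s \in S$.

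For this Dirac statement I would run the stopped process $Z_n = V_i(\Phi_{n \wedge \tilde\gamma})$ under $\Prob_{\delta_s}$. When $n < \tilde\gamma$, the state $\Phi_n$ lies in $I \setminus B_i$, so \cref{eq:dec,eq:noninc} give $PV_i(\Phi_n) \le V_i(\Phi_n)$; together with the Markov property this makes $(Z_n)$ a non-negative supermartingale with $\Expect_{\delta_s}[Z_0] = V_i(s) < \infty$. On the $\mathcal F_n^\Phi$-measurable event $C_n = \{n < \tilde\gamma\} \cap \{\Phi_n \in A_i\}$ we have $\Phi_n \in I \cap (A_i \setminus B_i)$, so \cref{eq:dec} sharpens the drift to $\Expect_{\delta_s}[Z_n - Z_{n+1} \mid \mathcal F_n^\Phi] \ge \varepsilon \ind_{C_n}$; telescoping and using $Z_n \ge 0$ gives $V_i(s) \ge \varepsilon \sum_{j \ge 0} \Prob_{\delta_s}(C_j)$, hence $\sum_j \Prob_{\delta_s}(C_j) < \infty$. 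By the Borel--Cantelli lemma $\Prob_{\delta_s}(\limsup_n C_n) = 0$, and since $H \subseteq \limsup_n C_n$ (on $H$ one has $\tilde\gamma = \infty$ and $\Phi_n \in A_i$ for infinitely many $n$), we conclude $\Prob_{\delta_s}(H) = 0$, which completes the proof.

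The step I expect to be most delicate is the reduction in the second paragraph: representing $E_i \cap I_\infty$ as the countable union $\bigcup_m \theta^{-m}H$ and justifying $\Prob_\mu(\theta^{-m}H) = \Expect_\mu[\Prob_{\delta_{\Phi_m}}(H)]$ through \cref{eqn:time-hom-m-p}, which requires care with how $\{\tilde\gamma = \infty\}$ and the B\"uchi event transform under $\theta^{-m}$; passing to Dirac initial measures is exactly what makes $Z_0 = V_i(s)$ a finite constant, so that no integrability hypothesis on $V_i$ is needed. The remaining ingredients --- that $(Z_n)$ is a supermartingale across the localisation time $\tilde\gamma$, the telescoping estimate, and Borel--Cantelli --- are routine. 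I also note that the bounded-increase condition \cref{eq:inc} plays no role in this (soundness) direction, since the argument only concerns the chain after its last visit to $B_i$.
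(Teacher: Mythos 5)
Your proof is correct, but it takes a genuinely different route from the paper's. The paper proves \cref{thm:streett} by constructing a modified kernel $P^I$ that makes every state outside $I$ absorbing, checking that $V_1,\dots,V_k$ satisfy the hypotheses of the Streett-supermartingale theorem of Abate, Giacobbe and Roy for the \emph{augmented} acceptance condition $(A_1, B_1\cup I^{\sf c}),\dots,(A_k,B_k\cup I^{\sf c})$ under $P^I$ (this is where \cref{eq:inc} and the case analysis on $B_i\cup I^{\sf c}$ enter), invoking that theorem as a black box, and then transferring the almost-sure conclusion from $\Prob^I_\mu$ back to $\Prob_\mu$ using the fact that the two measures agree on events contained in $\{\mc\in I^\omega\}$. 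You instead reprove the needed consequence from first principles, one Streett pair at a time: the reduction of $E_i\cap I_\infty$ to $\bigcup_m\theta^{-m}H$ via the last visit to $B_i$, the passage to Dirac initial measures through \cref{eqn:time-hom-m-p}, the stopped non-negative supermartingale $Z_n=V_i(\Phi_{n\wedge\tilde\gamma})$ with the $\varepsilon$-drained drift on $C_n$, the bound $\sum_j\Prob_{\delta_s}(C_j)\le V_i(s)/\varepsilon$, and Borel--Cantelli are all sound, and the measurability and integrability points you flag ($C_n\in\mathcal F_n^\Phi$, $Z_0=V_i(s)<\infty$) are handled correctly. The paper's approach buys modularity --- it reuses the published certificate theorem verbatim, which is why \cref{eq:inc} appears among the hypotheses --- at the cost of the kernel-surgery and measure-transfer bookkeeping in \cref{eqn:Step1,eqn:Step2}. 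Your approach buys a self-contained, more elementary argument and the genuine observation that \cref{eq:inc} is never used for this soundness direction, since the localisation at the last $B_i$-visit discards exactly the portion of the trajectory where the $M$-bounded increase would matter; this is consistent with the paper, which keeps \cref{eq:inc} only because it is part of the imported proof rule.
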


Our quantitative $\omega$-regular supermartingale certificate
requires synchronous composition between the system and a DSA recognising the same specification. 
Suppose that $k$ is the number of pairs in the acceptance condition.
Then, we require the simultaneous construction of a measurable set $I \in \Sigma$ and 
a sequence of measurable functions $V_0, \dots, V_k \colon S \to \Real_{\geq 0}$ 
such that $V_0$ satisfies \cref{eqn:invar-one,eqn:invar-noninc} and  
$V_i$ satisfies \cref{eq:dec,eq:inc,eq:noninc} for every $i = 1, \dots, k$. 
As a consequence of \cref{thm:invar,thm:streett,thm:decomp}, we have that $1 - \mu V_0$ is a lower bound 
on the probability that the system under analysis satisfies the $\omega$-regular specification.
\begin{theorem}[Quantitative Streett Supermartingales]\label{thm:quantitative-streett}
    Let $(A_1, B_1) \in \Sigma^2, \dots,\allowbreak (A_k, \allowbreak B_k)\in \Sigma^2$ be a Streett acceptance condition. 
    Suppose that there exists a measurable set $I \in \Sigma$ and $k+1$ measurable functions $V_0, \dots, V_k \colon S \to \Real_{\geq 0}$
    such that the following conditions hold:
    \begin{align}
        &\forall s \in I \colon P V_0 (s) \leq V_0(s),\label{qs:SI-dec}\\
        &\forall s \notin I \colon V_0(s) \geq 1,\label{qs:SI-ind}\\
        &\forall s \in I \cap (A_i \setminus B_i) \colon 
    PV_i(s) \leq V_i(s) - \varepsilon &\text{for }i = 1, \ldots, k,\label{qs:dec}\\
&\forall s \in I \cap B_i \colon
    PV_i(s) \leq V_i(s) + M &\text{for }i = 1, \ldots, k,\label{qs:inc}\\
&\forall s \in I \setminus (A_i \cup B_i) \colon 
    PV_i(s) \leq V_i(s) &\text{for }i = 1, \ldots, k,\label{qs:noninc}
    \end{align}
    for some constants $\varepsilon, M > 0$. Then,
    \begin{equation}
        \Prob_\mu \left( 
        \bigwedge_{i = 1}^k \sum_{n = 0}^{\infty} \ind_{A_i}(\Phi_n) < \infty \lor 
        \sum_{n = 0}^{\infty}\ind_{B_i}(\Phi_n) = \infty
        \right) \geq 1 - \mu V_0.
    \end{equation}
\end{theorem}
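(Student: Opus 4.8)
The plan is to derive \cref{thm:quantitative-streett} as a direct composition of the three results already established: \cref{thm:invar}, \cref{thm:streett}, and \cref{thm:decomp}. The argument is a bookkeeping exercise that matches hypotheses to conclusions, so I would keep it short. First I would fix the shift-invariant specification: let $L \in {\cal F}$ be the Streett acceptance event $L = \bigcap_{i=1}^k (\{ \sum_{n} \ind_{A_i}(\Phi_n) < \infty \} \cup \{ \sum_{n} \ind_{B_i}(\Phi_n) = \infty \})$, which is shift-invariant by the remark preceding the theorem (closure of shift-invariant events under countable Boolean operations applied to the Büchi events of \cref{ex:language-levy}). So \cref{thm:decomp} is applicable to $L$.

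Next I would extract the stochastic invariant bound. Constraints \cref{qs:SI-dec,qs:SI-ind} are exactly hypotheses \cref{eqn:invar-noninc,eqn:invar-one} of \cref{thm:invar} with value function $V_0$, so that theorem yields $\Prob_\mu(\mc \notin I^\omega) \leq \mu V_0$, equivalently $\Prob_\mu(\mc \in I^\omega) \geq 1 - \mu V_0$. This gives condition \cref{eqn:pee} of \cref{thm:decomp} with $p = 1 - \mu V_0$. I should note the mild edge case: if $\mu V_0 \geq 1$ the claimed bound $1 - \mu V_0$ is nonpositive and the conclusion is vacuous, so we may assume $\mu V_0 < 1$, which in particular forces $\Prob_\mu(\mc \in I^\omega) \geq 1 - \mu V_0 > 0$ — precisely the positivity precondition $\Prob_\mu(\mc \in I^\omega) > 0$ required by \cref{thm:streett}.

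Then I would invoke \cref{thm:streett}: constraints \cref{qs:dec,qs:inc,qs:noninc} are verbatim the hypotheses \cref{eq:dec,eq:inc,eq:noninc} on $V_1, \dots, V_k$, and together with $\Prob_\mu(\mc \in I^\omega) > 0$ they give $\Prob_\mu(\mc \in L \mid \mc \in I^\omega) = 1$, which is condition \cref{eqn:one} of \cref{thm:decomp}. Having verified both \cref{eqn:pee} and \cref{eqn:one} with $p = 1 - \mu V_0$, \cref{thm:decomp} immediately delivers $\Prob_\mu(\mc \in L) \geq 1 - \mu V_0$, which is the claimed inequality once $L$ is written out.

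There is no genuine obstacle here; the only things requiring a sentence of care are the vacuity/positivity dichotomy on $\mu V_0$ (ensuring \cref{thm:streett}'s precondition $\Prob_\mu(\mc \in I^\omega) > 0$ is met whenever the conclusion is non-trivial) and confirming that the Streett acceptance event is measurable and shift-invariant so that \cref{thm:decomp} genuinely applies — both already granted by the surrounding discussion. The proof is therefore essentially a three-line citation chain, and I would present it as such.
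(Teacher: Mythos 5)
Your proposal is correct and takes essentially the same route as the paper's own proof, which likewise obtains \cref{eqn:pee} from \cref{thm:invar}, \cref{eqn:one} from \cref{thm:streett}, and then concludes via \cref{thm:decomp} with $p = 1 - \mu V_0$. Your explicit treatment of the vacuity/positivity dichotomy on $\mu V_0$ (ensuring the precondition $\Prob_\mu(\mc \in I^\omega) > 0$ of \cref{thm:streett} and $p \in (0,1]$ of \cref{thm:decomp}) is a small point of care that the paper's proof leaves implicit.
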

Our methodology similarly applies to alternative acceptance conditions (see \cref{rmk:relative}), 
such as Rabin, parity and Muller automata, but requires a proof rule for almost sure acceptance of these conditions.

\begin{figure}[h]
    \centering
    \begin{tikzpicture}[node distance=3cm]
        \node[state] (0) {$q_0$};
        \node[state, left of=0] (1) {$q_1$};
        \node[state, right of=0] (2) {$q_2$};
        \draw ($(0)-(7mm,7mm)$) edge[->] (0);
        \draw (0) edge[->] node[above] {$x < 10$} (1);
        \draw (0) edge[->,bend left] node[above] {$x \geq 100$} (2);
        \draw (2) edge[->,bend left] node[below] {$10 \leq x < 100$} (0);
        \draw (2) edge[->, loop above] node[above] {$x \geq 100$} (2);
        \draw (0) edge[->, loop above] node[above] {$10 \leq x < 100$} (0);
        \draw (1) edge[->, loop above] node[above] { true } (1);

        \draw (2) edge[->, bend left=60] node[above] {$x < 10$} (1);
        
        \node[right=5mm of 2] {$F_1 = \{q_0,q_1\}, G_1 =\emptyset$};
        \node[below=1mm of 0] {\color{blue}$\varepsilon$-dec};
        \node[below=1mm of 1] {\color{blue}$\varepsilon$-dec};
        \node[below=1mm of 2] {\color{gray}0-inc};
    \end{tikzpicture}
    \caption{A DSA for the LTL specification $(x \geq 10) {\sf UG} (x \geq 100)$.}
    \label{fig:brwgtt}
    \label{fig:enter-label}
\end{figure}
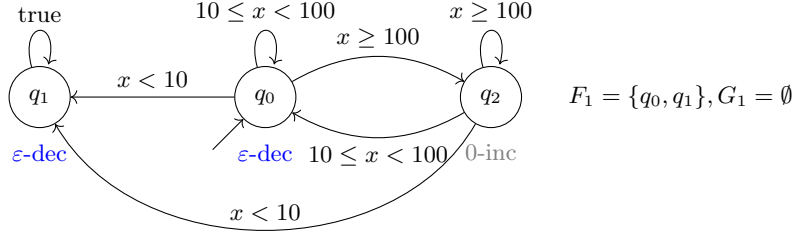
\begin{example}[Becoming Rich Without Getting Too Thin]\label{ex:brwgtt}
Consider the Gambler's Ruin model of \cref{fig:gambler}, or similarly the random walk of \cref{fig:reach-to-term}.
Consider the specification for which the amount $x$ eventually persists above 100 without ever going below 10.
This is a stabilise-while-avoid requirement 
specified as the LTL formula $\varphi = (x \geq 10) {\sf UG} (x \geq 100)$, 
and corresponds to the language accepted by the DSA in \cref{fig:brwgtt}. 
Our proof rule requires a region $I$ and two value functions $V_0$ and $V_1$ 
that simultaneously satisfy \cref{eqn:invar-noninc,eqn:invar-one,eq:dec,eq:noninc} on 
the synchronous composition. Firstly, we observe that it is impossible for any non-negative function $V_1$
to indefinitely decrease in the sink state $q_1$. Therefore, we must present a region $I$ that excludes $q_1$ and 
characterises every other reachable state, which we associate with a function $V_0$ that bounds from above the probability of leaving $I$:
\begin{align}
    V_0(x,q) &= 
    \begin{cases}
    \left( \frac{49}{51} \right)^{x-9} 
    &\text{if }(x, q) \in I\\
    1 &\text{otherwise},
    \end{cases} ~
    I 
    = 
    \left\{ (x,q)\colon
    \begin{array}{l}
    (q = q_0 \land 9 \leq  x \leq 100 \\
    \lor~(q = q_2 \land 99 \leq x)
    \end{array}
    \right\}.
\end{align}
Secondly, we observe that the expected value of $V_1$ must decrease by $\varepsilon$ in $q_0$ while never increasing in $q_2$.
We present a function with negative drift $PV_1(x,\cdot) - V_1(x,\cdot) < 0$ and choose an $\varepsilon > 0$ that upper-bounds the drift on $q_0$, which essentially indicates almost-sure finite permanence within $q_0$ conditional to $I^\omega$: 
\begin{align}
V_1(x, q) = 1 - e^{-x},  \quad \varepsilon = PV_1(100,\cdot) - V_1(100,\cdot).
\end{align}
As a result, we obtain the lower bound $1 - V_0(x,q_0) \leq \Prob_{\delta_x}(\mc \in L_{\varphi})$ on the 
probability of satisfying $\varphi$ from any initial state $x \in \Nat$. \qed
\end{example}

\begin{remark}[Upper Probability Bounds]
    Our quantitative $\omega$-regular supermartingale certificates also produce upper probability bounds. As $\omega$-regular languages are closed under complementation, 
    it suffices to compute a lower bound for the complementary specification. 
    According to the original representation of the specification, this requires the use of an appropriate complementation procedure~\cite{DBLP:conf/focs/Safra88}. 
    In the special case of LTL, it is sufficient to negate the formula. 
    \qed
\end{remark}

\begin{example}
    Consider the LTL verification problem of \cref{ex:brwgtt}. 
    A DSA for the complementary property $\lnot \varphi$ has the same structure of the automaton in \cref{fig:brwgtt}, 
    but has the alternative acceptance condition $F_1 = Q$ and $G_1 = \{ q_0, q_1 \}$. 
    This requires presenting a region $\bar{I}$ and value function $\bar{V}_0$ satisfying \cref{eqn:invar-noninc,eqn:invar-one} 
    and, as a consequence of \cref{eq:dec,eq:inc}, a value function $\bar{V}_1$ whose expected value must decrease by at least $\bar{\varepsilon} > 0$ on $q_2$ 
    and can increase by at most $M > 0$ on $q_0$ and $q_1$. Given any chosen bound $y \geq 9$ on the invariant, we present 
    \begin{equation}
        \bar{V}_0(x,q) = \begin{cases}
            \frac{1 -  (49/51)^{x-9}}{1 - (49/51)^{y-9}} & \text{if }x \geq 10 \land q \neq q_1\\
            0 &\text{otherwise},
        \end{cases}~ \bar{I} = \{ (x,q) \colon 0 \leq x \leq  y - 1 \},
    \end{equation}
    $\bar{V}_1(x,q) = y - x$, and $\bar{\varepsilon} = 0.02$. As a result, for every $x \in \Nat$ 
    we have the probability upper bound $\Prob_{\delta_x}(\mc \in L_\varphi) \leq \bar{V}_0(x,q_0)$. 
    Similarly to \cref{eq:term-cert},
    the tightness of $\bar{V}_0$ improves as $y$ increases. 
    For example, suppose the initial state is $50$. Under conservative numerical approximation, 
    we obtain $0.80 \leq \Prob_{\delta_{50}}(\mc \in L_\varphi) \leq 0.83$ with $y = 100$ and 
    $0.80 \leq \Prob_{\delta_{50}}(\mc \in L_\varphi) \leq 0.81$ with $y = 200$. \qed
\end{example}

Our proof rule reduces the quantitative $\omega$-regular model checking question to the 
problem of computing an appropriate region $I$ and appropriate value functions $V_0, \dots V_k$
satisfying the 
conditions of \cref{eqn:invar-noninc,eqn:invar-one,eq:dec,eq:inc,eq:noninc}. 
This extends to quantitative synthesis of parametrised control policies for stochastic processes whose 
transition probability kernel is conditional on control inputs, i.e., Markov decision processes (MDPs).
This is the problem of finding the parameters of a parametrised control policy for which the system satisfies an $\omega$-regular objective with a sufficiently high probability, 
which we reduce to the simultaneous synthesis of control parameters together with appropriate certificates.

\section{Algorithmic Synthesis of Supermartingale Certificates}\label{sec:algo}

The construction of certificates is a central objective in verification and control, 
supported by numerous algorithms for the automated synthesis of invariants and Lyapunov functions.
One standard approach for this purpose is to restrict the search within a specific class of parametrised function templates, 
which reduces their synthesis to the problem of computing appropriate parameters. 

We consider the problem of computing appropriate parameters $\zeta_0 \in Z_0, \dots, \allowbreak\zeta_k \in Z_k$ 
for the parametrised value functions $V_i \colon Z_i \times S \to \Real_{\geq 0}$ for $i = 0, \dots, k$, 
parameter $\eta \in H$ for the parametrised constraint $I \colon H \times S \to \{ \text{true}, \text{false} \}$, 
as well as control parameter $\kappa \in K$ for the parametrised transition probability kernel $P \colon K \times S \times \Sigma \to [0,1]$. 
In other words, we introduce functional templates for our stochastic invariant and Streett supermartingales, 
and assume a parametrised controller that governs the system behaviour according to its control parameter; 
the control parameter is constant throughout the system execution, 
whereas the control input varies over time as determined by the control policy. 

This results in a parametric model checking problem that encompasses the
quantitative $\omega$-regular control synthesis of  
memory-less parametrised control policies $\pi \colon K \times S \to U$
over MDPs with general state space $S$ and general control inputs $U$.
Given an MDP with kernel $\hat{P} \colon S \times U \times \Sigma \to [0,1]$ conditional on input,  
it suffices to express our parametrised transition kernel as $P(\kappa; s, A) = \hat{P}(s, \pi(\kappa; s), A)$. 
This also includes finite-memory parametrised policies, where it is required to augment $S$ with sufficient memory.  
The quantitative model checking of closed systems is the special case where $|K| = 1$. 

Given a desired lower probability bound $p \in (0,1]$, our objective is to compute values for the 
parameters $\zeta_0 \in Z_0, \dots, \allowbreak\zeta_k \in Z_k$, $\eta \in H$, $\kappa \in K$ and $\varepsilon,M > 0$ 
such that $p \leq 1- \mu V_0(\zeta_0)$ and the following universally quantified first-order logic formulae hold true: 
\begin{align}
    \forall s \in S &\colon I(\eta; s) \implies PV_0(\zeta_0, \kappa; s) \leq V_0(\zeta_0; s),\label{eqn:constraint1}\\
    \forall s \in S &\colon \lnot I(\eta; s) \implies V_0(\zeta_0; s) \geq 1,\label{eqn:constraint2}\\
   \forall s \in (A_i \setminus B_i) &\colon I(\eta; s) \implies PV_i(\zeta_i, \kappa; s) \leq V_i(\zeta_i; s) - \varepsilon & \text{for }i = 1, \dots, k,\label{eqn:constraint3}\\
    \forall s \in B_i &\colon I(\eta; s) \Rightarrow PV_i(\zeta_i, \kappa; s) \leq V_i(\zeta_i; s) + M & \text{for }i = 1, \dots, k,\label{eqn:constraint4}\\
    \forall s \notin A_i \cup B_i &\colon I(\eta; s) \implies PV_i(\zeta_i, \kappa; s) \leq V_i(\zeta_i; s) & \text{for }i = 1, \dots, k.\label{eqn:constraint5}
\end{align}
We require that post-expectations $(PV_i) \colon Z_i \times K \times S \to \Real_{\geq 0}$ and 
init-expectation $(\mu V_0)\colon Z_0 \to \Real_{\leq 0}$ are expressed in closed form, for which appropriate procedures exist~\cite{DBLP:conf/cav/GehrMV16}. 
Then, any algorithm that finds a satisfying assignment for the free parameters $\zeta_0, \dots, \zeta_k , \eta, \kappa, \epsilon$ and $M$ 
for the first-order formulae \cref{eqn:constraint1,eqn:constraint2,eqn:constraint3,eqn:constraint4,eqn:constraint5} would suffice.
According to the resulting form of the formulae above, one may select an appropriate decision procedure for this purpose. 

There are multiple approaches to solving the parameter synthesis problem expressed above. 
Firstly, we observe that the problem is decidable when the value functions and their 
expected values are expressed as polynomials of known degree and the constraints are 
expressed as semi-algebraic sets~\cite{tarski1998decision}. As a consequence, we have a
relatively complete algorithm under these assumptions,
in the sense that if polynomial certificates with sufficient precision on the probability bound exist and their degree is known 
then we have an algorithm to compute their coefficients. Under the additional assumption that $S$ is compact,
then polynomials with sufficiently high degree necessarily exist and we obtain complete algorithms for (relative to the existence of the almost-sure component $V_1, \dots, V_k$, see~\cref{rmk:relative})
that refine lower and upper bounds incrementally until a desired approximation gap is attained, 
leveraging the guarantees of \cref{thm:epsilon-complete,thm:complete}.

Decision procedures for quantified polynomial formulae are computationally intensive and, 
while theoretically feasible, pursuing arbitrary bounds is often impractical. 
A more practical approach (not reliant on compactness) is to select a polynomial template of desired degree 
while minimising the gap between upper and lower bounds on the probability of satisfaction, and 
possibly increase the degree until an allocated time budget is exhausted. 
Although this practical strategy is incomplete in general, in the sense that it may stop with trivial bounds, 
it is sound and produces useful results with sufficient time budget.

\begin{figure}[h]
    \centering
    \begin{tikzpicture}[baseline={(current bounding box.center)}]
        \begin{axis}[xlabel={$x$},
            ymin=0, ymax=1,
            xmin=10, xmax=100,
            xtick distance=1000,
            ytick distance=0.2,
            minor y tick num=1,
            scaled x ticks=false,
            xtick={10,20,30,40,50,60,70,80,90,100},
            height=5cm, 
            width=\textwidth*0.5,
            legend pos=south east,
            legend image post style={xscale=0.5}
            ]
            
            \addplot [black ] table[x={x}, y={s1}] {exampleBecomingRichOnce.dat};
            \addlegendentry{${\sf P}_{\delta_x}(L_\varphi)$}

            \addplot [name path=curve1, magenta, forget plot] table[x={x}, y={s5}] {upperboundBecRichOnce.dat};
            
            \addplot [name path=curve2, magenta] table[x={x}, y={s5}] {exampleBecomingRichOnce.dat};
            \addlegendentry{$d=5$}
            \addplot[magenta!20, forget plot] fill between[of=curve1 and curve2];
            
            \addplot [ cyan ] table[x={x}, y={s4}] {exampleBecomingRichOnce.dat};
            \addlegendentry{$d=4$}
            \addplot [ blue ] table[x={x}, y={s3}] {exampleBecomingRichOnce.dat};
            \addlegendentry{$d=3$}
            \addplot [ red ] table[x={x}, y={s2}] {exampleBecomingRichOnce.dat};
            \addlegendentry{$d=2$}

        \end{axis}  
    \end{tikzpicture}
    \quad
    \begin{tikzpicture}[node distance=2.1cm, baseline={(current bounding box.center)}]
        \node[state] (0) {$q_0$};
        \node[state, left of=0] (1) {$q_1$};
        \node[state, right of=0] (2) {$q_2$};
        \draw ($(0)-(7mm,7mm)$) edge[->] (0);
        \draw (0) edge[->] node[above] {$x < 10$} (1);
        \draw (0) edge[->] node[above] {$x \geq 100$} (2);
        \draw (2) edge[->, loop above] node[above] {true} (2);
        \draw (0) edge[->, loop above] node[above] {$10 \leq x < 100$} (0);
        \draw (1) edge[->, loop above] node[above] { true } (1);
        
        \node[below=1mm of 0,inner sep=0] {{\color{blue}$\varepsilon$-dec}};
        \node[below=1mm of 1,inner sep=0, xshift=-2mm] {$\varphi\equiv$~{\color{blue}$\varepsilon$-dec}};
        \node[below=1mm of 2,inner sep=0] {\color{red}M-inc};
        \node[below=5mm of 0,inner sep=0] {\color{red}M-inc};
        \node[below=5mm of 1,inner sep=0, xshift=-2.5mm] {$\lnot\varphi\equiv$~\color{red}M-inc};
        \node[below=5mm of 2,inner sep=0] {\color{blue}$\varepsilon$-dec};
        
    \end{tikzpicture}
    \caption{Polynomial approximations (and DSA) for the probability that the Gambler's Ruin in \cref{fig:gambler}
    satisfies $\varphi = (x \geq 10){\sf U}\allowbreak(x \geq 100)$; $d$ indicates the degree of the polynomial. 
    }
    \label{fig:richonce}
\end{figure}
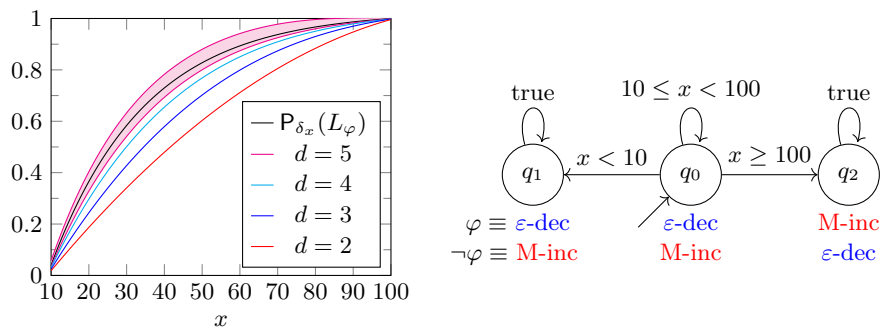
\begin{example}[Becoming Rich Once]
Consider the Gambler's Ruin example of \cref{fig:gambler} and the specification 
for which the process exceeds 100 before possibly falling below 10,
i.e., $\varphi = (x \geq 10){\sf U}\allowbreak(x \geq 100)$. 
This is recognised by the DSA in \cref{fig:richonce}, with the acceptance condition $F_1 = Q, G_1 = \{q_2\}$. 
This acceptance condition requires avoidance of $q_1$ and finite permanence in $q_0$, while imposing no restrictions on $q_2$. 
We assume $\mu = \delta_{50}$ and optimise the bounds accordingly. 
For the invariant region, we associate each automaton state with a parametrised semi-algebraic set, 
and for this example we obtain the rectangular region associating $q_0$ with the interval $[9,100]$, $q_1$ with the empty set, and $q_2$ with $[0, \infty)$. 
For each value function $V_i$, we adopt a polynomial template of degree $d$, whose coefficients are piecewise-defined according 
to the automaton state $q_j$:
\begin{equation}
    V_i(c_{i,j,0}, \dots, c_{i,j,d}; x,q_j) = c_{i,j,0} + c_{i,j,1}\cdot x + c_{i,j,2}\cdot x^2 + \dots + c_{i,j,d}\cdot x^d. 
\end{equation}
We obtain a piecewise-defined linear Streett supermartingale function given by $V_1(x,q_0) = 101 - x$, $V_1(x,q_1) = 101$, $V_1(x,q_2) = 0$, 
along with a piecewise-defined polynomial value function $V_0(x, q_1) = 1$, $V_0(x,q_2) = 0$, and 
 higher-degree polynomials for $V_0(x, q_0)$, yielding the lower probability bounds depicted in \cref{fig:richonce} for the degrees $d = 2,3,4, 5$. 
As shown, the lower probability bound becomes increasingly tighter with higher polynomial degrees. 

We further compute a polynomial upper approximation on the probability of satisfying $\varphi$ 
by computing a dual lower approximation on the probability of satisfying $\lnot \varphi$. 
This specification corresponds to the DSA of \cref{fig:richonce} with the acceptance 
condition $F_1 = Q, G_1 = \{q_0,q_1\}$, requiring avoidance of $q_2$. 
We obtain the invariant region $\bar{I}$ where $q_0$ is associated with $[9,100]$, 
$q_1$ with $[0,10]$, and $q_2$ with the empty set, along with a constant function $\bar{V}_1$.  
We then obtain the value function $\bar{V}_0(x,q_0)$ yielding the upper bound shown in \cref{fig:richonce} for $d = 5$ 
 (and more conservative bounds for lower degrees, not shown), with the remaining components defined as the constant functions $\bar{V}_0(x,q_1) = 0$ and $\bar{V}_0(x,q_2) = 1$. \qed

\end{example}

\begin{table}[]
    \centering
    \caption{Output of our quantitative verification experiments.} 
    \begin{tabular}{l|c|r|r}
    \toprule
    Benchmark & $\omega$-Regular Specification & \makecell[c]{Attained\\Bounds} & Time [s] \\
    \hline
    {\tt Gambler's Ruin} $(d=2,a^x)$ & ${\sf GF}(x = 0)$ &  
    [0.380, 0.671]
    & 
    8.64
    \\  
    {\tt Gambler's Ruin} $(d = 3,a^x)$ 
    & ${\sf GF}(x = 0)$ &  
    [0.545, 0.671]
    & 
    9.87
    \\  
    {\tt Gambler's Ruin} $(d = 4,a^x)$ & ${\sf GF}(x = 0)$ &  
    [0.601, 0.671]
    & 
    24.02
    \\
    {\tt Gambler's Ruin} $(d = 5,a^x)$ & ${\sf GF}(x = 0)$ 
    & 
    [0.621, 0.671]
    & 
    88.17
    \\
    \hline
    {\tt BecomingRichOnce} $(d=2)$
    &
    $ (x \geq 10) ~{\sf U}~(x \geq 100)$
    &
    [0.610, 1.000]
    &
    4.68
    \\
    {\tt BecomingRichOnce} $(d=3)$
    & 
    $ (x \geq 10) ~{\sf U}~(x \geq 100)$
    &
    [0.709, 0.974]
    &
    7.39
    \\
    {\tt BecomingRichOnce} $(d=4)$
    & 
    $ (x \geq 10) ~{\sf U}~(x \geq 100)$
    &
    [0.776, 0.908]
    &
    12.15
    \\
    {\tt BecomingRichOnce} $(d=5)$
    & 
    $ (x \geq 10) ~{\sf U}~(x \geq 100)$
    &
    [0.807, 0.880]
    &
    59.04
    \\
    \hline
    {\tt Reactivity1} $(d=2)$& ${\sf GF}(x \leq 6) \rightarrow {\sf GF}(x \leq 0)$ & [0.166, 0.166] &
    8.18
    \\

    {\tt Reactivity2} $(d=2)$ & ${\sf GF}(x \leq 10) \rightarrow {\sf GF}(x \geq 100)$ & [0.250, 0.250] & 
    8.26
    \\
    \bottomrule
    \end{tabular}
    
    \label{tab:experiments}
\end{table}

\begin{table}[]
    \centering
    \caption{Output of our quantitative control synthesis experiments.} 
    \begin{tabular}{l|c|r|r}
    \toprule
    Benchmark & $\omega$-Regular Specification & \makecell[c]{Target\\Bounds} & Time [s] \\
\hline
    {\tt Gambler's Ruin} $(d=2)$ & ${\sf GF}(x = 0)$ &  
    [0.999, 1.000]
    &
    0.69
    \\
    {\tt Becoming Rich Once} $(d=5)$ & $(x \geq 10)~{\sf U}~(x \geq 100)$ &  
    [0.950, 1.000]
    &
    12.93
    \\
    {\tt Reactivity1} $(d=2)$& ${\sf GF}(x \leq 6) \rightarrow {\sf GF}(x \leq 0)$ & [0.187, 1.000] & 
    4.22
    \\
    {\tt Reactivity2} $(d=2)$ & ${\sf GF}(x \leq 10) \rightarrow {\sf GF}(x \geq 100)$ & [0.542, 1.000] & 4.26 \\
    {\tt RepeatedCoin} $(d=3)$ & {\sf GF}$(x \geq 20)$ & [0.499, 0.501] & 0.87 \\
    \bottomrule
    \end{tabular}
    
    \label{tab:control-experiments}
\end{table}

We apply our algorithm to a number of infinite-state Markov chains and $\omega$-regular specifications (\cref{sec:casestudies}). We consider polynomial templates for which we use 
Handelman's Theorem~\cite[Proposition I.1]{HandelmanOriginalPaper1988} 
to reduce the synthesis problem of Streett supermartingales and a stochastic invariant (\cref{eqn:constraint1,eqn:constraint2,eqn:constraint3,eqn:constraint4,eqn:constraint5}) to a decision problem in the existential theory of the reals; for deriving upper bounds upon {\tt Gambler's Ruin}, we use an exponential template for the stochastic invariant \cite{DBLP:conf/pldi/WangS0CG21}. We solve our decision problems using Z3~\cite{DBLP:journals/cca/JovanovicM12}.

Our quantitative verification experiments in \cref{tab:experiments} seek to compute tight bounds upon the satisfaction probability of a specification. 
Notably, in verification the problem of lower bounding is independent of that of upper bounding the satisfaction probability, and both are solved as separate SMT queries. Our control synthesis experiments in \Cref{tab:control-experiments} seek to compute a control parameter for which the probability of satisfaction lies within given target upper and lower bounds. Notably, in control synthesis the first-order logic formulae corresponding to the upper and lower bound are combined in a conjunction and solved as part of the same SMT query. 

Our encoding exploits the structure of the DSA and the Streett supermartingale drift conditions. We heuristically constrain the stochastic invariant to take value 0 (i.e., satisfaction probability of 1) in sink states identified as surely accepting, and value 1 (i.e., satisfaction probability of 0) and sink states identified as surely rejecting, whereas we synthesise the parameters in every other case.

\section{Related Work}

The problem of quantitative model checking and control under $\omega$-regular specifications for finite state Markov chains (and MDPs) is a classic topic for which scalable and automated tools exist \cite{DBLP:conf/cav/DehnertJK017,DBLP:conf/qest/KatoenZHHJ09,DBLP:conf/cav/KwiatkowskaNP11,DBLP:conf/fm/HahnLSTZ14,DBLP:conf/esop/MoosbruggerBKK21,ProbTermToolAmber2021}. As a consequence of the limit behaviour of Markov chains (cf.\ \cite[Theorem 10.27]{DBLP:books/daglib/0020348} and \cite[Theorem 6.4.5]{durrett2019probability}), the quantitative model-checking question reduces to the computation of probabilities to reach accepting bottom strongly connected components. However, this approach does not apply to infinite state Markov chains, where instead finite abstractions \cite{DBLP:conf/qest/DesharnaisLT08,DBLP:journals/ejcon/0001SRHH12,SA13,DBLP:conf/hybrid/TkachevA13,DBLP:conf/cav/AbateGS24,cav25bisimulation} and proof certificates 
\cite{AutomatedVerificationSynthesisSHS,DBLP:conf/emsoft/DimitrovaM14,prajna2006BarrierCertificatesNonlinear,DBLP:conf/cav/ChakarovS13,DBLP:conf/cav/ChatterjeeGMZ22,DBLP:conf/popl/ChatterjeeNZ17,DBLP:journals/toplas/TakisakaOUH21,DBLP:conf/tacas/ChatterjeeHLZ23,Morgan1996ProofRulesProbabilisticLoops} 
constitute two major approaches.

Considering \textit{almost-sure} satisfaction, proof certificates based on martingale theory have been introduced for the specifications of reachability (cf.\ \cite[Corollary 4.4.8]{douc2018markov} and \cite{DBLP:conf/cav/ChakarovS13,DBLP:conf/cav/TakisakaZWL24,DBLP:journals/pacmpl/Huang0CG19,DBLP:journals/pacmpl/McIverMKK18}), persistence \cite[Section 3.1]{DBLP:conf/tacas/ChakarovVS16}, recurrence \cite[Section 3.2]{DBLP:conf/tacas/ChakarovVS16}, and for  reactivity specifications \cite{DBLP:conf/tacas/DimitrovaFHM16,DBLP:conf/cav/AbateGR24}. 
For quantitative specifications, supermartingale proof rules for stochastic invariance (cf.\ \cite[Theorem 1]{kushner1965stability}, \cite[Corollary 4.4.7]{douc2018markov}, and \cite{DBLP:conf/cav/ChatterjeeGMZ22,DBLP:conf/popl/ChatterjeeNZ17,DBLP:journals/toplas/TakisakaOUH21,DBLP:conf/sas/KatoenMMM10,DBLP:conf/cav/ZhiWLOZ24}), reach-avoidance \cite{DBLP:conf/tacas/ChatterjeeHLZ23}, and persistence \cite{DBLP:journals/csysl/AjeleyeZ24a} have been developed, establishing \textit{lower bounds} on the satisfaction probability. Almost-sure proof certificates and stochastic invariants have been combined (cf.\ \cref{thm:decomp}) to yield proof rules for \textit{upper bounding} the probability of termination (cf.\ \cite[Proposition 4]{ChatterjeeFixpoint2025Tacas}, \cite[Lemma 4.6]{MajumdarPOPL2025}, and \cite[Section 6.1]{DBLP:conf/popl/ChatterjeeNZ17}),
and in the context of cost analysis \cite{DBLP:conf/pldi/Wang0GCQS19,DBLP:conf/pldi/Wang0R21,DBLP:conf/cav/SunFCG23} to prove tail bounds on costs accrued prior to termination (cf.\ \cite[Section 6.3]{DBLP:journals/pacmpl/ChatterjeeGMZ24} and \cite[Theorem 6.8]{DBLP:journals/pacmpl/ChatterjeeGMZ24}). In the context of assertion-violation analysis for almost-surely terminating probabilistic programs, a supermartingale certificate (repulsing supermartingales) for stochastic invariance \cite{DBLP:conf/popl/ChatterjeeNZ17,DBLP:journals/toplas/TakisakaOUH21} is combined with a ranking supermartingale \cite[Section 5.1]{DBLP:conf/pldi/WangS0CG21} to yield upper and lower bounds on the probability of assertion violation. This need to combine supermartingale certificates 
has been interpreted and explained using order theory \cite{DBLP:conf/popl/HasuoSC16,DBLP:journals/pacmpl/HarkKGK20}, also yielding new order-theoretic justifications for classic results in martingale theory \cite[Corollary 4.3(2)]{DBLP:journals/toplas/TakisakaOUH21}.

Our results are reminiscent of prior observations in proof rules for quantitative termination analysis, and more generally weakest pre-expectation bounds, in the analysis of probabilistic programs. The notion of \textit{guard-strengthening} \cite{LowerBoundsPossDivPP} may be applied to derive arbitrarily tight lower bounds on the probability of termination by, in effect, restricting attention to a stochastic invariant (and yielding a new program that enjoys stronger termination probabilities). This same approximation property is established in countable-state MDPs \cite[Lemma 4.6]{MajumdarPOPL2025} with bounded discrete probabilistic choices.
Our \cref{thm:approx-general} shows that this applies not just to reachability, but to the richer class of shift-invariant specifications over general state-space Markov chains, by applying L{\'e}vy's 0-1 Law to the satisfaction probability process. Prior work has exploited the connection between infinite-horizon specifications and L{\'e}vy's 0-1 Law  (cf.\ \cite[Section 3.3]{DBLP:conf/icalp/KieferMSTW20}, \cite[Proposition 4]{DBLP:conf/tacas/DimitrovaFHM16}, \cite[Lemma 2]{DBLP:journals/tcs/Chatterjee07}), but we are the first to connect it with the existence of stochastic invariants.
Furthermore, in the context of termination analysis, prior work has observed that in finite state spaces there exists a stochastic invariant that characterises the quantities of interest without approximation error (cf.\ \cite[Theorem 23]{LowerBoundsPossDivPP} and \cite[Lemma 4.5]{MajumdarPOPL2025}).
Both results may be interpreted by applying \cref{thm:exact-finite} to the case where the specification under study is reachability (\cref{ex:reachAsTerm}). 

Converse results for the existence of proof certificates have been established under further topological assumptions \cite[Chapter 6 and Theorem 9.4.2]{meyn_tweedie_glynn_2009} about the transition kernel (e.g.\ the weak Feller property \cite[Theorem 3.2]{DBLP:journals/csysl/MajumdarSS24}). Under the assumption of a countable state space and bounded discrete probabilistic choices, recent work has introduced a sound and complete supermartingale proof rule for almost-sure termination \cite[Lemma 3.4]{MajumdarPOPL2025}, that is applicable to programs that are almost-surely terminating but not with finite expected time \cite{DBLP:conf/popl/FioritiH15}.

The algorithmic synthesis of supermartingale certificates and stochastic invariants draws upon techniques originally developed for the synthesis of invariants and ranking functions for deterministic systems \cite{DBLP:books/daglib/0080029,DBLP:conf/cav/ColonSS03,DBLP:journals/scp/ErnstPGMPTX07,DBLP:journals/tosem/NguyenKWF14,DBLP:conf/sas/SankaranarayananSM04}. These exploit Farkas' Lemma \cite{AgrawalC018,DBLP:conf/cav/ChakarovS13,DBLP:journals/toplas/ChatterjeeFNH18,DBLP:conf/popl/ChatterjeeNZ17,DBLP:conf/tacas/ColonS01,DBLP:conf/cav/ColonS02} and Positivstellensatz \cite{DBLP:conf/cav/ChatterjeeFG16,ChatterjeeAAAI2025,DBLP:conf/pldi/AsadiC0GM21,DBLP:conf/fm/ChatterjeeGGKZ24,DBLP:conf/cav/ChatterjeeFG16,sankaranarayanan2013LyapunovFunctionSynthesis,she2013DiscoveringPolynomialLyapunov,DBLP:conf/cdc/Papachristodoulou02} results, including Handelman's theorem \cite{HandelmanOriginalPaper1988,DBLP:journals/pacmpl/ChatterjeeGNZ24,DBLP:conf/pldi/ZikelicCBR22,PLDI24HandelmanBayesian} which yields a linear decision problem in certain cases. These reduce the problem of constructing a proof certificate to that of solving a problem in quantifier-free nonlinear real arithmetic, and under further assumptions (including the provision of invariants a priori and for autonomous systems) to a linear program. Beyond one-shot synthesis procedures, methods based on counterexample-guided inductive synthesis \cite{DBLP:conf/tacas/BatzCJKKM23} and certificate learning have been proposed \cite{DBLP:conf/aaai/ZikelicLHC23,DBLP:conf/atva/AnsaripourCHLZ23,DBLP:conf/nips/ZikelicLVCH23,DBLP:conf/tacas/ChatterjeeHLZ23,DBLP:conf/aaai/LechnerZCH22,ctsm,neuralmc,DBLP:conf/sigsoft/GiacobbeKP22}.

\section{Conclusion}

Our result shows that, to bound the probability of a shift-invariant specification from below, 
it suffices to present a stochastic invariant together with an almost-sure certificate conditional to this invariant.
It additionally shows the 
necessary existence of appropriate invariants, bounding the probability with arbitrary approximation gap in the general case, and with no error in the finite case. 

Leveraging our result, we have introduced the first quantitative supermartingale certificates for $\omega$-regular specifications, 
encompassing safety, reachability, reach-avoidance and LTL properties. 
Our new quantitative $\omega$-regular certificates are amenable to algorithmic synthesis using symbolic procedures (e.g., polynomials Positivstellensatz), and are additionally prone to future extensions towards 
machine learning techniques~\cite{DBLP:conf/concur/AbateEGPR23}. 
Our approach provides lower and upper bounds on the probability of satisfaction of these properties and readily extends to automated control synthesis with parametrised control policies. 

Our decomposition into stochastic invariants and almost-sure certificates provides the basis for the future development of further quantitative certificates, restricting the focus on (1) proving shift invariance of the specification under study and (2) defining a proof rule for its almost-sure satisfaction. Our converse results guarantee completeness relative to the adopted proof rule for almost-sure acceptance and the adopted algorithm for their automated construction. Our work lays the foundations for developing new model checking, control synthesis and policy learning algorithms with quantitative formal guarantees. 

\subsubsection*{\ackname}
This work was funded in part by the Advanced Research + Invention Agency (ARIA) under the Safeguarded AI programme.

\bibliographystyle{splncs04}
\bibliography{main}

\appendix
\section{Proof of \cref{thm:language-levy}}

We define the $n^\text{th}$ iterate of the function $\theta : \Omega \to \Omega$ by $\theta^{0}(\omega) = \omega$, and $\theta^{n+1}(\omega) = \theta(\theta^n(\omega))$ for all $\omega \in \Omega$ and $n \in \Nat$.
We use the notation $\mathbf{1}\{ \cdot \}$ for the indicator function of an event.
We say that two events $A \in \mathcal{F}$ and $B \in \mathcal{F}$ are \textit{equivalent up to a } $\Prob_\mu$ \textit{-null set} (or simply \textit{equivalent}, for short) if 
\begin{equation}
	\Prob_\mu(A \cap B^{\sf c}) = 0 \qquad \text{ and } \qquad \Prob_\mu(B \cap A^{\sf c}) = 0,
\end{equation}
or in other words, the symmetric difference of the events $A$ and $B$ has probability zero: $\Prob_\mu\left( (A \cap B^{\sf c}) \cup (B \cap A^{\sf c}) \right) = 0$.

\begin{lemma}\label{lem:multi-step-shift}
Suppose $L \in \mathcal{F}$ is shift-invariant. Then
\begin{equation}
	\omega \in L \Longleftrightarrow \theta^n(\omega) \in L,
\end{equation}
for all $n \in \Nat$ and $\omega \in \Omega$.
\end{lemma}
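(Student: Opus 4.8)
The statement is a routine unfolding of the defining equation $\theta^{-1}L = L$ followed by an induction on $n$, so I do not expect a genuine obstacle here; the only thing to be careful about is the direction of function composition in the iterate $\theta^{n+1} = \theta \circ \theta^n$ and the translation between the set-level identity $\theta^{-1}L=L$ and the pointwise biconditional it encodes.

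First I would record the single-step reformulation: by definition of the preimage, $\theta^{-1}L = \{\omega \in \Omega : \theta(\omega) \in L\}$, so the hypothesis $\theta^{-1}L = L$ is literally the assertion that $\omega \in L \iff \theta(\omega) \in L$ for every $\omega \in \Omega$. This is the $n=1$ instance, and the $n=0$ instance holds trivially since $\theta^0(\omega) = \omega$ by the definition of the iterate given above.

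Next I would run the induction on $n$. Assume, as inductive hypothesis, that $\omega \in L \iff \theta^n(\omega) \in L$ holds for all $\omega \in \Omega$. Apply this hypothesis with $\omega$ replaced by $\theta(\omega)$ to obtain $\theta(\omega) \in L \iff \theta^n(\theta(\omega)) \in L$. Since $\theta^{n}(\theta(\omega)) = \theta^{n+1}(\omega)$ by the definition of the iterate, this reads $\theta(\omega) \in L \iff \theta^{n+1}(\omega) \in L$. Chaining this with the base relation $\omega \in L \iff \theta(\omega) \in L$ yields $\omega \in L \iff \theta^{n+1}(\omega) \in L$, completing the induction and hence the proof. (An essentially equivalent route is to prove $\theta^{-n}L = L$ for all $n$ by the same induction and then note $\theta^n(\omega) \in L \iff \omega \in \theta^{-n}L = L$; I would pick whichever phrasing reads more cleanly alongside the rest of the appendix.)
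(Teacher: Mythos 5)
Your proof is correct and follows essentially the same route as the paper's: unfold $\theta^{-1}L = L$ into the pointwise biconditional $\omega \in L \Leftrightarrow \theta(\omega) \in L$ and induct on $n$. The only cosmetic difference is that you apply the inductive hypothesis at $\theta(\omega)$ (using $\theta^{n+1} = \theta^n \circ \theta$) while the paper applies the one-step equivalence at $\theta^n(\omega)$ (using its definition $\theta^{n+1} = \theta \circ \theta^n$); these agree since iterates of a single map commute.
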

\begin{proof}
We establish this by induction on $n \in \Nat$.
\begin{itemize}
	\item \textbf{Base case}: Holds trivially because $\theta^0(\omega) = \omega$.
	\item \textbf{Inductive step}: we prove 
	\begin{equation}\label{eqn:ind-step-goal}
		\forall \omega \in \Omega \colon \omega \in L \Longleftrightarrow \theta^{n+1}(\omega) \in L
	\end{equation}
	under the assumption that 
	\begin{equation}
		\forall \omega \in \Omega \colon \omega \in L \Longleftrightarrow \theta^n(\omega) \in L.
	\end{equation}
	
	First, we expand \cref{eqn:shift-invariant} to obtain that 
	\begin{equation}\label{eqn:expanded-shift-invariant}
		\forall \omega \in \Omega \colon \omega \in L \Longleftrightarrow \theta(\omega) \in L
	\end{equation}
	Considering an arbitrary $\omega \in \Omega$:
	\begin{align}
		\omega \in L &\Longleftrightarrow \theta^n (\omega) \in L &\text{by I.H.,}\\
		&\Longleftrightarrow \theta(\theta^n (\omega)) \in L &\text{by \cref{eqn:expanded-shift-invariant},}\\
		&\Longleftrightarrow \theta^{n + 1}(\omega) \in L &\text{by definition of $\theta^{n+1}$,}
	\end{align}
	which proves \cref{eqn:ind-step-goal}.
	\end{itemize}\qed
    
\end{proof}

Restating \cref{lem:multi-step-shift}, the events $\{ \mc \in L \}$ and the event $\{ \mc \circ \theta^n \in L \}$ are equivalent for all $n \in \Nat$, and hence the equality
\begin{equation}\label{eqn:equality-shifted-indicators}
	\ind\{ \mc \in L \} = \ind\{ \mc \circ \theta^n \in L \}
\end{equation}
holds $\Prob_\mu$-almost surely, for all $n \in \Nat$. 

By the time-homogeneous Markov property \cref{eqn:time-hom-m-p} applied to the random variable $H = \ind \{\mc \in L \}$:
\begin{align}\label{eqn:invoke-markov-property}
	\Expect_\mu [ \ind \{ \mc \circ \theta^n \in L \} \mid {\cal F}^\mc_n ] 
	&= \Expect_{\Phi_n}[ \ind \{ \mc \in L \} ]\\
	&= \Prob_{\Phi_n} ( \mc \in L ).\label{eqn:invoke-markov-property-2}
\end{align}

Note that we used the fact that for any event $A \in \mathcal{F}$, $\Expect_{\Phi_n} [ \ind \{ \mc \in A \} ] = \Prob_{\Phi_n} ( \mc \in A )$, for all $n \in \Nat$.

By \cref{eqn:equality-shifted-indicators} and \cref{eqn:invoke-markov-property-2}:
\begin{align}\label{eqn:prob-is-cexp}
	\Expect_\mu [ \ind \{ \mc \in L \} \mid {\cal F}^\mc_n ]
	&= \Prob_{\Phi_n} ( \mc \in L ).
\end{align}
for all $n \in \Nat$.

\begin{lemma}\label{lem:martingale-prob-sequence}
Suppose $L \in \mathcal{F}$ is shift-invariant.	Then, 
the sequence
\begin{equation}\label{eqn:sequence-probabilities-martingale}
	\Prob_{\Phi_0}(\mc \in L),
	\Prob_{\Phi_1}(\mc \in L),
	\ldots,
	\Prob_{\Phi_n}(\mc \in L),
	\ldots
\end{equation}
of random variables $\Prob_{\Phi_n}(\mc \in L): \Omega \to [0, 1]$ is a non-negative martingale adapted to the filtration ${\cal F}_n^\Phi$.
\end{lemma}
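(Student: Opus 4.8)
The plan is to recognise the sequence in \cref{eqn:sequence-probabilities-martingale} as a Doob martingale: each of its terms is the conditional expectation of one fixed, bounded random variable with respect to the filtration ${\cal F}_n^\Phi$. Concretely, \cref{eqn:prob-is-cexp}---already derived above from the shift-invariance consequence \cref{eqn:equality-shifted-indicators} and the time-homogeneous Markov property \cref{eqn:time-hom-m-p}---states precisely that $\Prob_{\Phi_n}(\mc \in L) = \Expect_\mu[\ind\{\mc \in L\} \mid {\cal F}_n^\Phi]$ almost surely, for every $n \in \Nat$. Given this identity, all the requirements of the lemma reduce to standard properties of conditional expectation.

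First I would dispatch the easy conditions. Non-negativity is immediate, since $\Prob_{\Phi_n}(\mc \in L) \geq 0$; integrability follows from boundedness, $0 \leq \Prob_{\Phi_n}(\mc \in L) \leq 1$; and adaptedness holds because, by \cref{eqn:prob-is-cexp}, $\Prob_{\Phi_n}(\mc \in L)$ agrees $\Prob_\mu$-a.s.\ with an ${\cal F}_n^\Phi$-measurable random variable---equivalently, it is the measurable map $s \mapsto \Prob_{\delta_s}(\mc \in L)$, whose measurability is part of the construction underlying \cref{thm:meas-spec}, composed with the ${\cal F}_n^\Phi$-measurable random variable $\Phi_n$.

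The martingale identity would then be the tower property. Writing $X := \ind\{\mc \in L\}$, for $m \leq n$ we have ${\cal F}_m^\Phi \subseteq {\cal F}_n^\Phi$, hence
\begin{align}
	\Expect_\mu\bigl[\Prob_{\Phi_n}(\mc \in L) \mid {\cal F}_m^\Phi\bigr]
	&= \Expect_\mu\bigl[\Expect_\mu[X \mid {\cal F}_n^\Phi] \mid {\cal F}_m^\Phi\bigr] \\
	&= \Expect_\mu[X \mid {\cal F}_m^\Phi]
	= \Prob_{\Phi_m}(\mc \in L) \qquad \text{a.s.},
\end{align}
applying \cref{eqn:prob-is-cexp} at index $n$, then the tower rule, then \cref{eqn:prob-is-cexp} at index $m$; specialising to $n = m+1$ yields the one-step martingale property. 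I do not expect a genuine obstacle here: the substance of the argument is entirely contained in the previously established identity \cref{eqn:prob-is-cexp}, so the only point worth verifying is that $X$ is integrable---which it is, being bounded---so that the conditional expectations are well defined and the tower rule applies.
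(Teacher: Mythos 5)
Your proposal is correct and follows essentially the same route as the paper: both identify the process as the Doob martingale $\Expect_\mu[\ind\{\mc\in L\}\mid{\cal F}_n^\Phi]$ via \cref{eqn:prob-is-cexp}, dispatch non-negativity and integrability by boundedness, establish adaptedness by composing the measurable map $s\mapsto\Prob_{\delta_s}(\mc\in L)$ with $\Phi_n$, and obtain the martingale property from the tower rule. The only cosmetic difference is that you state the tower-property step for general $m\leq n$ rather than just $n$ and $n+1$.
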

\begin{proof}

We show that the process is a non-negative martingale \cite[Section 5.2, p.232]{durrett2019probability}, namely, that it is (i) non-negative, (ii) integrable, (iii) adapted, and (iv) satisfies the martingale property:
\begin{enumerate}
	\item \textbf{Non-negativity.} This holds because for all $\omega \in \Omega$ the value $\Prob_{\Phi_n(\omega)}(\mc \in L)$ is a probability, and therefore non-negative.
	\item \textbf{Integrability.} This holds because for each $n \in \Nat$, the random variable $\Prob_{\Phi_n}(\mc \in L)$ is upper bounded by 1, meaning its absolute value is finite in expectation: $\Expect_\mu [ ~| \Prob_{\Phi_n}(\mc \in L) |~] \leq 1 < \infty$ for all $n \in \Nat$.
	\item \textbf{Adaptedness.} This holds because the function $s \mapsto \Prob_s (\mc \in L) : S \to \Real$ is $\Sigma / \mathcal{B}(\Real)$-measurable \cite[Proposition 5.2.2(i), p.104]{douc2018markov}, and $\Phi_n : \Omega \to S$ is ${\cal F}_n^\Phi / \Sigma$-measurable, therefore the composition $\omega \mapsto \Prob_{\Phi_n(\omega)} (\mc \in L) : \Omega \to \Real$ is ${\cal F}_n^\Phi / {\cal B}(\Real)$-measurable, as required.
	\item \textbf{Martingale property.} Namely, we must show that 
	\begin{equation}
		\Expect_\mu [ \Prob_{\Phi_{n + 1}}(\mc \in L) 
		\mid {\cal F}_n^\mc
		 ] = \Prob_{\Phi_n}(\mc \in L)
	\end{equation}
	holds $\Prob_\mu$-almost surely.
	This follows by applying the \textit{tower property} of conditional expectations \cite[Theorem 5.1.6, Section 5.1, p.228]{durrett2019probability} applied to the $\sigma$-algebras ${\cal F}_n^\mc \subseteq {\cal F}_{n + 1}^\mc$:
	\begin{align}
		&\Expect_\mu [ 
		\Prob_{\Phi_{n + 1}}(\mc \in L) 
		\mid 
		{\cal F}_n^\mc 
		]\\
		&= 
		\Expect_\mu [ 
		\Expect_\mu [
		\ind \{ \mc \in L \}		
		\mid 
		{\cal F}_{n + 1}^\Phi
		]
		\mid 
		{\cal F}_n^\Phi
		] &\text{by \cref{eqn:prob-is-cexp},}\\
		&= 
		\Expect_\mu [ 
		\ind \{ \mc \in L \} \mid 
		{\cal F}_n^\Phi 
		] &\text{by tower property,} \\
		&= \Prob_{\Phi_n} ( \mc \in L ) &\text{by \cref{eqn:prob-is-cexp}}.
	\end{align}
\end{enumerate}
This establishes that the sequence \cref{eqn:sequence-probabilities-martingale} is a martingale adapted to ${\cal F}_n^\Phi$.
\end{proof}

\begin{lemma}
	Suppose $L \in \mathcal{F}$ is shift-invariant. Then
	\begin{equation}\label{eqn:goal1}
		\Prob_\mu 
		\left( 
		\left(\forall n \in \Nat \colon \Prob_{\Phi_n}(\mc \in L) > 0\right)
		\lor 
		\left(
		\lim_{n \to \infty} 
		\Prob_{\Phi_n}(\mc \in L) = 0
		\right)
		\right) = 1
	\end{equation}
\end{lemma}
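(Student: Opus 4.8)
The plan is to work with $M_n := \Prob_{\Phi_n}(\mc \in L)$, which by \cref{lem:martingale-prob-sequence} is a non-negative martingale adapted to ${\cal F}_n^\Phi$, and to extract the disjunction in \cref{eqn:goal1} from the single structural fact that $0$ is an absorbing value for a non-negative martingale. Since each $M_n$ takes values in $[0,1]$, the event $\{\forall n\colon M_n > 0\}$ is exactly the complement of $\{\exists n\colon M_n = 0\}$, so it suffices to prove that $\{\exists n\colon M_n = 0\}$ is contained, up to a $\Prob_\mu$-null set, in $\{\lim_{n\to\infty} M_n = 0\}$.

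The first step is a one-step absorption claim: for each $n$, $\Prob_\mu(\{M_n = 0\}\setminus\{M_{n+1}=0\}) = 0$. To see this, multiply the martingale identity $\Expect_\mu[M_{n+1}\mid{\cal F}_n^\Phi] = M_n$ by the ${\cal F}_n^\Phi$-measurable indicator $\ind\{M_n = 0\}$ and take expectations, obtaining $\Expect_\mu[M_{n+1}\,\ind\{M_n=0\}] = 0$; as the integrand is non-negative this forces $M_{n+1} = 0$ $\Prob_\mu$-almost surely on $\{M_n = 0\}$. A routine induction on $m \geq n$ using countable subadditivity then gives $\Prob_\mu\bigl(\{M_n = 0\}\setminus\bigcap_{m\geq n}\{M_m=0\}\bigr) = 0$, and since $\bigcap_{m\geq n}\{M_m=0\} \subseteq \{\lim_m M_m = 0\}$ literally, the event $\{M_n = 0\}$ is contained in $\{\lim_m M_m = 0\}$ up to a null set.

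Taking the union over $n \in \Nat$, the event $\{\exists n\colon M_n = 0\} = \bigcup_n\{M_n = 0\}$ is contained in $\{\lim_m M_m = 0\}$ up to a countable union of null sets, hence up to a single $\Prob_\mu$-null set. Combined with the complementation observation above, this yields $\Prob_\mu\bigl(\{\forall n\colon M_n > 0\}\cup\{\lim_n M_n = 0\}\bigr) = 1$, which is \cref{eqn:goal1}.

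I do not expect a substantial obstacle here: the argument uses only the martingale property and non-negativity, and in particular does not invoke the martingale convergence theorem. The only points requiring care are the bookkeeping of the countably many null sets in the induction and in the final union, and the measurability of the events involved, which holds because each $M_n$ is ${\cal F}_n^\Phi$-measurable by \cref{lem:martingale-prob-sequence} and all events appearing in \cref{eqn:goal1} are countable combinations of the sets $\{M_n = 0\}$.
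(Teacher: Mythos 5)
Your proof is correct and follows essentially the same route as the paper: both establish one-step absorption at $0$ by integrating the martingale identity over the ${\cal F}_n^\Phi$-measurable event $\{\Prob_{\Phi_n}(\mc\in L)=0\}$, propagate it to all later times by induction and countable subadditivity, and then conclude by complementation. Your formulation via set containment up to null sets is just the De Morgan dual of the paper's intersection of probability-one events, so there is nothing substantive to add.
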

\begin{proof}

We first show that 
\begin{align}\label{eqn:absorb-step-goal}
	\Prob_\mu 
	(
	\Prob_{\Phi_n}(\mc \in L) = 0 
	\land
	\Prob_{\Phi_{n + 1}}(\mc \in L) > 0
	) = 0
\end{align}
for all $n \in \Nat$.

We recall the definition of the conditional expectation $\Expect_\mu [ \Prob_{\Phi_{n + 1}} (\mc \in L) \mid {\cal F}_n^\Phi ]$ as an ${\cal F}_n^\Phi$-measurable random variable satisfying the following \textit{averaging} property \cite[Section 5.1, p.221]{durrett2019probability}:
\begin{equation}\label{eqn:c-expect-average}
	\int_{\omega \in A} 
	\Prob_{\Phi_{n + 1}(\omega)}(\mc \in L) ~\Prob_\mu(\mathrm{d}\omega)
	=
	\int_{\omega \in A}
	\Expect_\mu [ 
	\Prob_{\Phi_{n + 1}} (\mc \in L)
	\mid 
	{\cal F}_n^\Phi
	](\omega) ~\Prob_\mu(\mathrm{d}\omega),
\end{equation}
for all $A \in {\cal F}_n^\Phi$.

By the fact that the sequence $\Prob_{\Phi_n}(\mc \in L)$ is a martingale (\Cref{lem:martingale-prob-sequence}), we rewrite this as:
\begin{equation}\label{eqn:c-expect-average-substituted}
	\forall A \in {\cal F}_n^\Phi \colon
	\int_{\omega \in A} 
	\Prob_{\Phi_{n + 1}(\omega)}(\mc \in L) ~\Prob_\mu(\mathrm{d}\omega)
	=
	\int_{\omega \in A}
	\Prob_{{\Phi_n}(\omega)} (\mc \in L)
	~\Prob_\mu(\mathrm{d}\omega).
\end{equation}
where we replaced $	\Expect_\mu [ 
	\Prob_{\Phi_{n + 1}} (\mc \in L)
	\mid 
	{\cal F}_n^\Phi
	](\omega)$ by $\Prob_{\Phi_n(\omega)} (\mc \in L)$ using \cref{lem:martingale-prob-sequence}.
	
We note that the event 
\begin{equation}\label{eqn:zero-set-of-prob}
	\left\{ \omega \in \Omega \colon \Prob_{\Phi_n(\omega)}(\mc \in L) = 0 \right\}
\end{equation}
is an element of the $\sigma$-algebra ${\cal F}_n^\Phi$ as $\Prob_{\Phi_n}(\Phi \in L)$ is ${\cal F}_n^\Phi / {\cal B}(\Real)$-measurable.

Instantiating \cref{eqn:c-expect-average-substituted} by setting $A$ to the event \cref{eqn:zero-set-of-prob}, we obtain that 
\begin{equation}
	\int_{\omega : \Prob_{\Phi_n(\omega)}(\mc \in L) = 0}
	\Prob_{\Phi_{n + 1}(\omega)}(\mc \in L)~
	\Prob_\mu(\mathrm{d}\omega) = 0,
\end{equation}
which implies (e.g., by \cite[Lemma 26(iv), p.33]{pollard_2001}) that 
\begin{equation}\label{eqn:one-step-absorption}
	\Prob_\mu \left( 
	\Prob_{\Phi_n}(\mc \in L) = 0 
	\land 
	\Prob_{\Phi_{n + 1}}(\mc \in L) > 0
	\right) 
	= 0.
\end{equation}
This establishes \cref{eqn:absorb-step-goal}.

We then prove that the event 
\begin{equation}\label{eqn:goal-2-k}
	\Prob_{\Phi_n}(\mc \in L) = 0
	\land 
	\Prob_{\Phi_{n + k + 1}}(\mc \in L) > 0
\end{equation}
has $\Prob_\mu$-measure zero for all $n,k \in \Nat$ by induction on $k \in \Nat$ for an arbitrary $n \in \Nat$.
\begin{itemize}
	\item \textbf{Base case.} We must show that 
	\begin{equation}
	\Prob_\mu \left(
	\Prob_{\Phi_n}(\mc \in L) = 0
	\land 
	\Prob_{\Phi_{n + 1}}(\mc \in L) > 0
	\right) = 0
	\end{equation}
	which is immediate from \cref{eqn:absorb-step-goal}.
	
	\item \textbf{Inductive step.} We must prove 
	\begin{equation}\label{eqn:goal-inductive-step}
		\Prob_\mu \left(
		\Prob_{\Phi_n}(\mc \in L) = 0
		\land 
		\Prob_{\Phi_{n + k + 2}}(\mc \in L) > 0
		\right) = 0
	\end{equation}
	under the assumption
	\begin{equation}\label{eqn:absorb-induc-hyp}
		\Prob_\mu \left(
		\Prob_{\Phi_n}(\mc \in L) = 0
		\land 
		\Prob_{\Phi_{n + k + 1}}(\mc \in L) > 0
		\right) = 0.
	\end{equation}
	
	By instantiating \cref{eqn:absorb-step-goal} by replacing $n$ with $n + k + 1$, we obtain:
	\begin{equation}\label{eqn:absorb-step-shifted}
		\Prob_\mu \left(
		\Prob_{\Phi_{n+k + 1}}(\mc \in L) = 0
		\land 
		\Prob_{\Phi_{n + k + 2}}(\mc \in L) > 0
		\right) = 0.
	\end{equation}
	
	From \cref{eqn:absorb-step-shifted} we take a conjunction with event $ \Prob_{\Phi_n} (\mc \in L) = 0 $ to conclude
	\begin{equation}\label{eqn:step-case-zero}
		\Prob_\mu \left(
		\Prob_{\Phi_n} (\mc \in L) = 0 
		\land
		\Prob_{\Phi_{n+k + 1}}(\mc \in L) = 0
		\land 
		\Prob_{\Phi_{n + k + 2}}(\mc \in L) > 0
		\right) = 0,
	\end{equation} 
	and separately, starting with the inductive hypothesis \cref{eqn:absorb-induc-hyp} we take a conjunction with the event $\Prob_{\Phi_{n + k + 2}}(\mc \in L)  > 0$ to conclude:
	\begin{equation}\label{eqn:step-case-nonzero}
		\Prob_\mu \left(
		\Prob_{\Phi_n} (\mc \in L) = 0 
		\land
		\Prob_{\Phi_{n + k + 1}}(\mc \in L) > 0
		\land 
		\Prob_{\Phi_{n + k + 2}}(\mc \in L) > 0
		\right) = 0,
	\end{equation}
	where in both cases we used the fact that a conjunction of an event with a zero probability event yields an event that has zero probability.
	
	By the fact that the event 
	\begin{equation}
		\Prob_{ \Phi_{n + k + 1}}(\mc \in L) = 0 
		\lor 
		\Prob_{ \Phi_{n + k + 1}}(\mc \in L) > 0
	\end{equation}
	occurs $\Prob_\mu$-almost surely, and taking the union of the disjoint events referred to by \cref{eqn:step-case-nonzero,eqn:step-case-zero}, we obtain that
    \begin{equation}
        \Prob_\mu \left(
		\Prob_{\Phi_n} (\mc \in L) = 0 
		\land 
		\Prob_{\Phi_{n + k + 2}}(\mc \in L) > 0
		\right) = 0,
    \end{equation}
    which proves \cref{eqn:goal-inductive-step}, and thereby, by induction, establishes \cref{eqn:goal-2-k}.	
\end{itemize}

We may restate \cref{eqn:goal-2-k} as:
\begin{equation}\label{eqn:restate-1}
\forall n \in \Nat, \forall m > n \colon
    \Prob_\mu\left( 
    \Prob_{\Phi_n}(\mc \in L) = 0 
    \land 
    \Prob_{\Phi_m}(\mc \in L) > 0
    \right) 
    = 0.
\end{equation}

By applying to \cref{eqn:restate-1} the fact that a countable union of probability zero events has probability zero, we obtain:
\begin{equation}\label{eqn:intermediary-1}
    \forall n \in \Nat \colon
    \Prob_\mu \left(
    \Prob_{\Phi_n}(\mc \in L) = 0 
    \land
    \left( 
    \exists m > n \colon
    \Prob_{\Phi_m}(\mc \in L) > 0
    \right)
    \right) = 0.
\end{equation}

Taking the complement of the above event mentioned in \cref{eqn:intermediary-1} for each $n \in \Nat$:
\begin{equation}\label{eqn:intermediary-2}
    \forall n \in \Nat \colon 
    \Prob_\mu \left( 
    \Prob_{\Phi_n}(\mc \in L) > 0 
    \lor
    \forall m > n \colon 
    \Prob_{\Phi_m}(\mc \in L) = 0
    \right)
    = 1.
\end{equation}

We note that for all $n \in \Nat$, the event 
\begin{equation}
    \left\{ 
    \forall m > n \colon 
    \Prob_{\Phi_m}(\mc \in L)  = 0
    \right\}
\end{equation}
is a subset of the event 
\begin{equation}
    \left\{ 
    \lim_{n \to \infty} 
    \Prob_{\Phi_n}(\mc \in L) = 0
    \right\}.
\end{equation}

Therefore, we rewrite \cref{eqn:intermediary-2} into:
\begin{equation}\label{eqn:intermediary-3}
    \forall n \in \Nat \colon
    \Prob_\mu \left( 
    \Prob_{\Phi_n}(\mc \in L) > 0 
    \lor
    \lim_{n \to \infty}
    \Prob_{\Phi_n}(\mc \in L) = 0
    \right)
    =
    1.
\end{equation}

Since a countable intersection of probability 1 events has probability 1, we conclude from \cref{eqn:intermediary-3} that:
\begin{equation}\label{eqn:intermediary-4}
    \Prob_\mu\left(
    \left( 
    \forall n \in \Nat \mathpunct.
    \Prob_{\Phi_n}(\mc \in L) > 0
    \right)
    \lor 
    \left( 
    \lim_{n \to \infty}
    \Prob_{\Phi_n}(\mc \in L)  = 0
    \right)
    \right)
    = 1
\end{equation}
as desired in \cref{eqn:goal1}.\qed
\end{proof}

By complementing \cref{eqn:intermediary-4}, we obtain
\begin{equation}\label{eqn:intermediary-5}
    \Prob_\mu\left( 
    \left(
    \exists n \in \Nat \colon 
    \Prob_{\Phi_n}(\mc \in L) = 0 \right)
    \land 
    (\Omega \smallsetminus
    \lim_{n \to \infty}
    \Prob_{\Phi_n}(\mc \in L) = 0)
    \right)
    = 0.
\end{equation}
By L{\'evy}'s 0-1 Law \cite[Theorem 5.5.8]{durrett2019probability}, the event
\begin{equation}
    \Omega \smallsetminus \lim_{n \to \infty} \Prob_{\Phi_n}(\mc \in L) = 0,
\end{equation}
is $\Prob_\mu$-equivalent to the event
\begin{equation}
    \lim_{n \to \infty} \Prob_{\Phi_n}(\mc \in L)  = 1.
\end{equation}
By this observation and \cref{eqn:intermediary-5} we conclude:
\begin{equation}\label{eqn:intermediary-6}
    \Prob_\mu\left(
    \exists n \in \Nat \colon
    \Prob_{\Phi_n}(\mc \in L) = 0
    \land 
    \lim_{n \to \infty}
    \Prob_{\Phi_n}(\mc \in L) = 1
    \right)
    = 0.
\end{equation}
Since
\begin{equation}\label{eqn:decomposition-omega-ltp}
    \Omega = \left\{ 
    \exists n \in \Nat \colon 
    \Prob_{\Phi_n}(\mc \in L) = 0
    \right\}
    \cup \left\{ 
    \forall n \in \Nat \colon \Prob_{\Phi_n}(\mc \in L) > 0
    \right\}
\end{equation}
we conclude by \cref{eqn:intermediary-6} and the law of total probability (using \cref{eqn:decomposition-omega-ltp}) that:
\begin{equation}
\begin{aligned}\label{eqn:intermed1}
    &\Prob_\mu\left(
    \lim_{n \to \infty}
    \Prob_{\Phi_n}(\mc \in L) = 1\right)\\
    &\qquad = 
    \Prob_\mu\left(
    \left(
    \lim_{n \to \infty}
    \Prob_{\Phi_n}(\mc \in L) = 1
    \right)
    \land
    \left(
    \forall n \in \Nat \colon \Prob_{\Phi_n}(\mc \in L) > 0
    \right)
    \right).
\end{aligned}
\end{equation}

\begin{lemma}\label{eqn:replace-safe-with-inf}
The event 
\begin{equation}
    \left(
    \lim_{n \to \infty}
    \Prob_{\Phi_n}(\mc \in L) = 1
    \right)
    \land
    \left(
    \forall n \in \Nat \colon \Prob_{\Phi_n}(\mc \in L) > 0
    \right)
\end{equation}
is equivalent to the event 
\begin{equation}
    \inf_n \Prob_{\Phi_n}(\mc \in L) > 0
    \land 
    \lim_{n \to \infty}
    \Prob_{\Phi_n}(\mc \in L) = 1.
\end{equation}
\end{lemma}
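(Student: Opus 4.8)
The plan is to show that the two events are in fact equal \emph{as subsets of $\Omega$}; since set equality trivially implies equivalence up to a $\Prob_\mu$-null set, this is enough. The argument is purely pointwise, so I would fix an outcome $\omega \in \Omega$, abbreviate $x_n := \Prob_{\Phi_n(\omega)}(\mc \in L) \in [0,1]$, and establish the set-theoretic equivalence
\begin{equation}
    \Big( \lim_{n \to \infty} x_n = 1 \ \land\ \forall n \in \Nat \colon x_n > 0 \Big)
    \iff
    \Big( \inf_n x_n > 0 \ \land\ \lim_{n \to \infty} x_n = 1 \Big).
\end{equation}
The conjunct $\lim_n x_n = 1$ is common to both sides, so the only thing requiring attention is the relationship between the conditions $\forall n\colon x_n > 0$ and $\inf_n x_n > 0$, under the standing hypothesis that $x_n \to 1$.

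The direction $(\Leftarrow)$ is immediate: if $\inf_n x_n > 0$ then $x_n \geq \inf_m x_m > 0$ for every $n \in \Nat$. For the direction $(\Rightarrow)$, I would use that $x_n \to 1$ to obtain an index $N \in \Nat$ with $x_n > \tfrac12$ for all $n \geq N$, and then bound
\begin{equation}
    \inf_n x_n \ \geq\ \min\{\, x_0, \dots, x_{N-1}, \tfrac12 \,\} \ >\ 0,
\end{equation}
the strict positivity of the right-hand side following because it is the minimum of finitely many strictly positive reals ($x_0, \dots, x_{N-1}$ are positive by hypothesis, and $\tfrac12 > 0$). Combining the two inclusions gives the desired equality of events.

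I do not expect a genuine obstacle here: the statement is essentially a deterministic fact about $[0,1]$-valued sequences converging to $1$, and the bulk of the work in the overall proof of \cref{thm:language-levy} has already been done in the preceding lemmas (the martingale property, absorption at $0$, and L\'evy's $0$--$1$ Law). The only point needing a moment's care is that the threshold index $N$ depends on the sample point $\omega$; this is harmless since the argument is carried out pointwise rather than uniformly in $\omega$. Measurability of both events, should it be needed, follows as before from measurability of $s \mapsto \Prob_s(\mc \in L)$ and of each $\Phi_n$. Once this lemma is in place, substituting its conclusion into \cref{eqn:intermed1} yields $\Prob_\mu(\mc \in L) = \Prob_\mu(\inf_n \Prob_{\delta_{\Phi_n}}(\mc \in L) > 0)$, completing the proof of \cref{thm:language-levy}.
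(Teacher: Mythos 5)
Your proposal is correct and matches the paper's own argument essentially step for step: both establish set-theoretic equality of the two events pointwise, use the convergence to $1$ to bound the tail of the sequence below by $\tfrac12$, and conclude positivity of the infimum as a minimum of finitely many strictly positive terms, with the converse direction being immediate. No gaps.
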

\begin{proof} 

We prove in fact that the two events are identical in the sense that they are the same subset of $\Omega$:

$(\subseteq)$ Supposing that 
    \begin{equation}\label{eqn:premise-limit-1}
        \lim_{n \to \infty} \Prob_{\Phi_n(\omega)}(\mc \in L) = 1
    \end{equation} and 
    \begin{equation}\label{eqn:premise-g-pos}
    \forall n \in \Nat \colon \Prob_{\Phi_n(\omega)}(\mc \in L) > 0,
    \end{equation}
    and expanding the definition of \cref{eqn:premise-limit-1} we obtain
    \begin{equation}\label{eqn:limit-with-eps}
        \forall \epsilon > 0~
        \exists m~
        \forall n \geq m \colon
        \Prob_{\Phi_n(\omega)}(\mc \in L) \geq 1 - \epsilon.
    \end{equation}
    \Cref{eqn:limit-with-eps} implies, by substituting $\epsilon = \frac{1}{2}$ (although any positive value would be sufficient):
    \begin{equation}\label{eqn:expand-1}
        \exists m ~
        \forall n \geq m 
        \colon 
        \Prob_{\Phi_n(\omega)}(\mc \in L) \geq \frac{1}{2}
    \end{equation}
    and taking $m_0$ as the witness for $m$ in \cref{eqn:expand-1} which in combination with \cref{eqn:premise-g-pos}, means that 
    \begin{equation}
        \inf_{n} \Prob_{\Phi_n(\omega)}(\mc \in L)
        \geq \min\left( \Prob_{\Phi_0(\omega)}(\mc \in L),\ldots, 
        \Prob_{\Phi_{m_0}(\omega)}(\mc \in L), 
        \frac{1}{2}
        \right),
    \end{equation}
    but this lower bound is the minimum of a finite number of strictly positive quantities, and is therefore itself strictly positive.
    Hence we have established that 
    \begin{equation}
        \inf_{n} \Prob_{\Phi_n(\omega)}(\mc \in L) > 0.
    \end{equation}

    $(\supseteq)$ This follows immediately from the observation that if the infimum of a sequence is strictly positive, then all terms in the sequence must be strictly positive.
\qed
\end{proof}

Finally, since the event 
\begin{equation}
    \inf_{n} \Prob_{\Phi_n(\omega)}(\mc \in L) > 0
    \land 
    \lim_{n \to \infty}
    \Prob_{\Phi_n}(\mc \in L) = 0
\end{equation}
is empty, combining \cref{eqn:intermediary-6,eqn:intermed1,eqn:replace-safe-with-inf}, and applying the law of total probability, we obtain that:
\begin{equation}
    \Prob_\mu
    \left( 
    \lim_{n \to \infty} 
    \Prob_{\Phi_n}(\mc \in L) = 1
    \right)
    = 
    \Prob_\mu\left(
    \inf_n \Prob_{\Phi_n}(\mc \in L) > 0\right), 
\end{equation}
which completes the proof of \cref{thm:language-levy}, since (by L{\'e}vy 0-1 Law) the events $\mc \in L$ and the event $\lim_{n \to \infty} 
    \Prob_{\Phi_n}(\mc \in L) = 1$ are $\Prob_\mu$-equivalent.

\section{Proof of \cref{thm:approx-general}}

The event 
\begin{equation}
    \{ \mc \in L \}
\end{equation}
is $\Prob_\mu$-equivalent to the event 
\begin{equation}\label{eqn:inf-gt-0}
    \inf_n \Prob_{\Phi_n}(\mc \in L) > 0
\end{equation}
as established by \cref{thm:language-levy}.

The event \cref{eqn:inf-gt-0} is equal to the following event
\begin{equation}
    \bigcup_{k = 0}^{\infty}
    \left\{ 
    \inf_n \Prob_{\Phi_n}(\mc \in L) > \frac{1}{k+1} 
    \right\}
\end{equation}
which is a countable increasing union. 
Applying the Monotone Convergence Theorem (\cite[Theorem 12, p.26]{pollard_2001} and \cite[Theorem 2.59]{axler2019measure}), the probability of the event 
\begin{equation}\label{eqn:approximant-k}
    \left\{ 
    \inf_n \Prob_{\Phi_n}(\mc \in L) > \frac{1}{k+1} 
    \right\}
\end{equation}
converges monotonically to the probability of the event \cref{eqn:inf-gt-0}, as $k \rightarrow \infty$. 
Expanding definitions, this means that for every $\epsilon > 0$ there exists $k_0 \in \Nat$ such that for all $k \geq k_0$, we have 
\begin{equation}\label{eqn:lower1}
    \Prob_\mu\left( 
    \inf_n \Prob_{\Phi_n}(\mc \in L) \geq \frac{1}{k+1}
    \right) \geq 
    \Prob_\mu\left( 
    \inf_n \Prob_{\Phi_n}(\mc \in L) > 0
    \right) - \epsilon 
\end{equation}
and furthermore, since the event \cref{eqn:approximant-k} is a subset of the event \cref{eqn:inf-gt-0} for all $k \in \Nat$, it follows that 
\begin{equation}\label{eqn:upper1}
    \Prob_\mu\left( 
    \inf_n \Prob_{\Phi_n}(\mc \in L) \geq \frac{1}{k+1}
    \right) \leq 
    \Prob_\mu\left( 
    \inf_n \Prob_{\Phi_n}(\mc \in L) > 0
    \right)
\end{equation}

Combining \cref{eqn:lower1,eqn:upper1} we obtain that for all $\epsilon > 0$, there exists $k_0 \in \Nat$ for which:
\begin{equation}
\label{eqn:sandwich-inf-statement-1}
    \Prob_\mu\left( 
    \inf_n \Prob_{\Phi_n}(\mc \in L) > 0
    \right) - \epsilon
    \leq 
    \Prob_\mu\left( 
    \inf_n \Prob_{\Phi_n}(\mc \in L) \geq \frac{1}{k_0+1}
    \right)
\end{equation}
and
\begin{equation}\label{eqn:sandwich-inf-statement-2}
    \Prob_\mu\left( 
    \inf_n \Prob_{\Phi_n}(\mc \in L) \geq \frac{1}{k_0+1}
    \right) \leq 
    \Prob_\mu\left( 
    \inf_n \Prob_{\Phi_n}(\mc \in L) > 0
    \right).
\end{equation}

Define the set $I \in \Sigma$ by
\begin{equation}
    I = \left\{ 
    s \in S \colon 
    \Prob_{\delta_s}(\mc \in L) \geq \frac{1}{k_0 + 1}
    \right\}.
\end{equation}
Then, we note that for all $\omega \in \Omega$:
\begin{align}
    &\inf_n \Prob_{\Phi_n(\omega)}(\mc \in L) \geq \frac{1}{k_0+1}\\
    &\Longleftrightarrow
    \forall n \in \Nat \colon 
    \Prob_{\Phi_n(\omega)}(\mc \in L) \geq \frac{1}{k_0+1}\\
    &\Longleftrightarrow
    \forall n \in \Nat \colon 
    \Phi_n(\omega) \in I \\
    &\Longleftrightarrow \mc(\omega) \in I^\omega.
\end{align}

Using this fact, and the equality established by \cref{thm:language-levy} we may rewrite \cref{eqn:sandwich-inf-statement-1,eqn:sandwich-inf-statement-2} into: 
\begin{equation}
\forall \epsilon > 0,
\exists I \in \Sigma \colon
    \Prob_\mu\left( 
    \mc \in L
    \right) - \epsilon
    \leq 
    \Prob_\mu\left( 
    \mc \in I^\omega
    \right) \leq 
    \Prob_\mu\left( 
    \mc \in L
    \right).
\end{equation}

The fact that the event \cref{eqn:approximant-k} is a subset of the event \cref{eqn:inf-gt-0} ensures ${\sf P}_\mu(\mc \in I^\omega \land \mc \notin L) = 0$.

\section{Proof of \cref{thm:exact-finite}}

Supposing that $S$ is finite, define 
\begin{equation}
    I = \left\{ 
    s \in S \colon \Prob_{\delta_s}(\mc \in L) \geq 
    \min_{s \in S \colon \Prob_{\delta_s}(\mc \in L) > 0} \Prob_{\delta_s}(\mc \in L) 
    \right\} 
\end{equation}

Then, we note that for all $\omega \in \Omega$:
\begin{align}
    &\inf_n \Prob_{\Phi_n(\omega)}(\mc \in L) > 0\\
    &\Longleftrightarrow 
    \inf_n \Prob_{\Phi_n(\omega)}(\mc \in L) \geq \min_{s \in S : \Prob_{\delta_s}(\mc \in L) > 0} \Prob_{\delta_s}(\mc \in L)\\
    &\Longleftrightarrow 
    \forall n \in \Nat \colon 
    \Prob_{\Phi_n(\omega)}(\mc \in L) \geq \min_{s \in S : \Prob_{\delta_s}(\mc \in L) > 0} \Prob_{\delta_s}(\mc \in L)\\
    &\Longleftrightarrow
    \forall n \in \Nat \colon 
    \Phi_n(\omega) \in I\\
    &\Longleftrightarrow
    \mc(\omega) \in I^\omega
\end{align}
where $\min_{s \in S : \Prob_{\delta_s}(\mc \in L) > 0}(\mc \in L)$ exists and is strictly greater than zero, being a minimum of a finite number of strictly positive values.

\section{Proof of \cref{thm:decomp}}
 Suppose 
    \begin{equation}
    {\sf P}_\mu(\mathbf{\Phi} \in I^\omega) \geq p \land
    {\sf P}_\mu (\mathbf{\Phi} \in L \mid \mathbf{\Phi} \in I^\omega) = 1
    \end{equation}
    then by expanding the definition of conditional expectation
    \begin{equation}
    {\sf P}_\mu(\mathbf{\Phi} \in I^\omega) \geq p \land
    \frac{{\sf P}_\mu (\mathbf{\Phi} \in L \land \mathbf{\Phi} \in I^\omega)}{
    {\sf P}_\mu (\mathbf{\Phi} \in I^\omega)
    } = 1
    \end{equation}
    Then, by the law of total probability we have 
    \begin{equation}\label{eqn:lawTotalProb}
        {\sf P}_\mu(\mathbf{\Phi} \in L) = {\sf P}_\mu({\bf \Phi} \in L \land {\bf \Phi} \in I^\omega) + {\sf P}_\mu({\bf \Phi} \in L \land {\bf \Phi} \notin I^\omega)
    \end{equation}
    and therefore we have
    \begin{equation}
    {\sf P}_\mu(\mathbf{\Phi} \in I^\omega) \geq p \land
    \frac{{\sf P}_\mu(\mathbf{\Phi} \in L) -  {\sf P}_\mu (\mathbf{\Phi} \in L \land \mathbf{\Phi} \notin I^\omega)}{
    {\sf P}_\mu (\mathbf{\Phi} \in I^\omega)
    } = 1
    \end{equation}
Multiplying the denominator:
\begin{equation}
    {\sf P}_\mu(\mathbf{\Phi} \in I^\omega) \geq p \land
    {\sf P}_\mu(\mathbf{\Phi} \in L) -  {\sf P}_\mu (\mathbf{\Phi} \in L \land \mathbf{\Phi} \notin I^\omega)
     = {\sf P}_\mu (\mathbf{\Phi} \in I^\omega)
    \end{equation}
and therefore
\begin{equation}
    {\sf P}_\mu(\mathbf{\Phi} \in I^\omega) \geq p \land
    {\sf P}_\mu(\mathbf{\Phi} \in L) 
     = {\sf P}_\mu (\mathbf{\Phi} \in I^\omega) 
    + 
    {\sf P}_\mu (\mathbf{\Phi} \in L \land \mathbf{\Phi} \notin I^\omega)
    \end{equation}
This implies that 
\begin{equation}
    {\sf P}_\mu(\mathbf{\Phi} \in L) 
     = {\sf P}_\mu (\mathbf{\Phi} \in I^\omega) 
    + 
    {\sf P}_\mu (\mathbf{\Phi} \in L \land \mathbf{\Phi} \notin I^\omega) \geq {\sf P}_\mu(\mathbf{\Phi} \in I^\omega) \geq p.
\end{equation}

\section{Proof of \cref{thm:epsilon-complete}}
Let $\epsilon > 0$ and suppose
\begin{equation}\label{eqn:premise}
    {\sf P}_\mu(\mathbf{\Phi} \in L) \geq p
\end{equation}
From \cref{eqn:premise} and the law of total probability \cref{eqn:lawTotalProb}:
\begin{equation}
    {\sf P}_\mu({\bf \Phi} \in L \land {\bf \Phi} \in I^\omega) + {\sf P}_\mu({\bf \Phi} \in L \land {\bf \Phi} \notin I^\omega)
    \geq p.
\end{equation}

By \cref{thm:approx-general} there exists $I \in \Sigma$ for which 
    \begin{equation}
    {\sf P}_\mu({\bf \Phi} \in I^\omega \land {\bf \Phi} \notin L) = 0 
    \land 
        {\sf P}({\bf {\Phi}} \in I^\omega) \leq {\sf P}({\bf {\Phi}} \in L) \leq {\sf P}({\bf {\Phi}} \in I^\omega) + \epsilon
    \end{equation}

By adding ${\sf P}_\mu({\bf \Phi} \in I^\omega \land {\bf \Phi} \in L)$ to both sides of the first conjunct:
\begin{equation}
    \begin{aligned}
    &{\sf P}_\mu({\bf \Phi} \in I^\omega \land {\bf \Phi} \in L) + 
    {\sf P}_\mu({\bf \Phi} \in I^\omega \land {\bf \Phi} \notin L) = {\sf P}_\mu({\bf \Phi} \in I^\omega \land {\bf \Phi} \in L)\\
    &\land 
        {\sf P}({\bf {\Phi}} \in I^\omega) \leq {\sf P}({\bf {\Phi}} \in L) \leq {\sf P}({\bf {\Phi}} \in I^\omega) + \epsilon
    \end{aligned}
\end{equation}
By law of total probability \cref{eqn:lawTotalProb} applied to the first conjunct:
\begin{equation}
    {\sf P}_\mu({\bf \Phi} \in I^\omega) = {\sf P}_\mu({\bf \Phi} \in I^\omega \land {\bf \Phi} \in L)
    \land 
        {\sf P}({\bf {\Phi}} \in I^\omega) \leq {\sf P}({\bf {\Phi}} \in L) \leq {\sf P}({\bf {\Phi}} \in I^\omega) + \epsilon
    \end{equation}
Dividing both sides of first conjunct by ${\sf P}_\mu({\bf \Phi} \in I^\omega)$:
\begin{equation}
    1 = \frac{{\sf P}_\mu({\bf \Phi} \in I^\omega \land {\bf \Phi} \in L)}{{\sf P}_\mu({\bf \Phi} \in I^\omega)}
    \land 
        {\sf P}({\bf {\Phi}} \in I^\omega) \leq {\sf P}({\bf {\Phi}} \in L) \leq {\sf P}({\bf {\Phi}} \in I^\omega) + \epsilon
    \end{equation}
Using the definition of ${\sf P}_\mu (\mathbf{\Phi} \in L \mid \mathbf{\Phi} \in I^\omega)$ 
\begin{equation}
    1 = {\sf P}_\mu (\mathbf{\Phi} \in L \mid \mathbf{\Phi} \in I^\omega)
    \land 
        {\sf P}({\bf {\Phi}} \in I^\omega) \leq {\sf P}({\bf {\Phi}} \in L) \leq {\sf P}({\bf {\Phi}} \in I^\omega) + \epsilon
    \end{equation}
Using ${\sf P}({\bf {\Phi}} \in L) \geq p$:
\begin{equation}
    1 = {\sf P}_\mu (\mathbf{\Phi} \in L \mid \mathbf{\Phi} \in I^\omega)
    \land 
        p \leq {\sf P}({\bf {\Phi}} \in L) \leq {\sf P}({\bf {\Phi}} \in I^\omega) + \epsilon
    \end{equation}
and rearranging the inequalities in the second conjunct:
\begin{equation}
    1 = {\sf P}_\mu (\mathbf{\Phi} \in L \mid \mathbf{\Phi} \in I^\omega)
    \land
     {\sf P}({\bf {\Phi}} \in I^\omega) \geq p - \epsilon
    \end{equation}

\section{Proof of \cref{thm:complete}}

Suppose $L \in {\cal F}$ is shift-invariant and $S$ is finite.
By \cref{thm:exact-finite} there exists an $I \in \Sigma$ such that 
\begin{equation}
    {\sf P}_\mu(\mc \in I^\omega \land \mc \notin L ) = 0 \land {\sf P}_\mu( \mc \in I^\omega ) = {\sf P}_\mu(\mc \in L).
\end{equation}
This demonstrates satisfaction of \cref{eqn:pee} with $p = {\sf P}_\mu(\mc \in L)$, so we turn to showing that \cref{eqn:one} holds, starting with
\begin{equation}
    {\sf P}_\mu(\mc \in I^\omega \land \mc \notin L ) = 0.
\end{equation}
By adding ${\sf P}_\mu(\mc \in I^\omega \land \mc \in L)$ to both sides we obtain:
\begin{equation}
    {\sf P}_\mu(\mc \in I^\omega \land \mc \notin L ) 
    +
    {\sf P}_\mu(\mc \in I^\omega \land \mc \in L )
    = 
    {\sf P}_\mu(\mc \in I^\omega \land \mc \in L ).
\end{equation}
By the law of total probability:
\begin{equation}
    {\sf P}_\mu(\mc \in I^\omega)
    = 
    {\sf P}_\mu(\mc \in I^\omega \land \mc \in L ).
\end{equation}
Dividing both sides by ${\sf P}_\mu(\mc \in I^\omega)$:
\begin{equation}
    1
    = 
    \frac{{\sf P}_\mu(\mc \in I^\omega \land \mc \in L )}{
    {\sf P}_\mu( \mc \in L )
    }.
\end{equation}
Using the definition of conditional expectation we arrive at:
\begin{equation}
    1
    = 
    {\sf P}_\mu(\mc \in L \mid \mc \in I^\omega),
\end{equation}
namely, \cref{eqn:one}.

\section{Proof of \Cref{thm:streett}}

Given the probability transition kernel $P : S \times \Sigma \to [0, 1]$, and the set $I \in \Sigma$ we define a modified transition kernel $P^I : S \times \Sigma \to [0, 1]$ by:
\begin{equation}
    P^I(s, A) = \begin{cases}
        P(s, A) &s \in I\\
        \ind_A(s) &s \notin I 
    \end{cases}.
\end{equation}
Intuitively, transition kernel $P^I$ yields the same behaviour as $P$, except that if at any given time $\Phi_n^I \notin I$ then for all $m \geq n$ we have $\Phi_m^I = \Phi_n^I$ almost surely.
By \Cref{thm:meas-spec}, $P^I$ induces a probability measure and stochastic process $\mc^I$ over specifications ${\sf P}^I_\mu : {\cal F} \to [0,1]$ on the trajectory space $(\Omega, {\cal F})$.

We show that the functions $V_i : S \to \Real_{\geq 0}$ for $i = 1, \ldots, k$ satisfying \cref{eq:dec,eq:inc,eq:noninc} constitute Streett supermartingales \cite[Theorem 2]{DBLP:conf/cav/AbateGR24} proving that the Streett acceptance condition:
\begin{equation}\label{eqn:aug-streett}
    (A_1, B_1 \cup I^{\sf c}) \in \Sigma^2, \ldots,
    (A_k, B_k \cup I^{\sf c}) \in \Sigma^2
\end{equation}
is satisfied almost surely under ${\sf P}^I_\mu$, namely:
\begin{equation}\label{eqn:aug-almost-sure}
        \Prob_\mu^I \left( 
        \bigwedge_{i = 1}^k \sum_{n = 0}^{\infty} \ind_{A_i}(\Phi_n^I) < \infty \lor 
        \sum_{n = 0}^{\infty}\ind_{B_i \cup I^{\sf c}}(\Phi_n^I) = \infty
        \right) = 1. 
    \end{equation}
We argue this by cases, by showing that the functions $V_i : S \to \Real_{\geq 0}$ satisfy the requirements of Streett supermartingales \cite[Theorem 2]{DBLP:conf/cav/AbateGR24} with respect to the acceptance condition \cref{eqn:aug-streett}, for each Streett pair $(A_i, B_i \cup I^{\sf c})$ ranging over $i = 1, \ldots, k$:
\begin{itemize}
    \item Case $s \in A_i \setminus (B_i \cup I^{\sf c})$: we observe that $A_i \setminus (B_i \cup I^{\sf c}) = I \cap ( A_i \setminus B_i )$, in which $V_i : S \to \Real_{\geq 0}$ satisfies $\epsilon$-decrease, by \cref{eq:dec}.
    \item Case $s \in (B_i \cup I^{\sf c})$: we note that if $s \in B_i \cap I$ then the required drift condition follows from \cref{eq:inc}. Otherwise, if $s \notin I$, then since the Markov chain induced by the kernel $P^I$ remains in the same state with probability 1, we have that $P^IV_i(s) = V_i(s) \leq V_i(s) + M$, as required.
    \item Case $s \in S \setminus (A_i \cup B_i \cup I^{\sf c})$: noting that $S \setminus ( A_i \cup B_i \cup I^{\sf c}) = I \setminus (A_i \cup B_i)$, by \cref{eq:noninc} and the fact that $s \in I$ we have $P^IV_i(s) = PV_i(s) \leq V_i(s)$ as required.
\end{itemize}
This establishes, by invoking \cite[Theorem 2]{DBLP:conf/cav/AbateGR24} that \cref{eqn:aug-almost-sure} holds.
Observing that 
\begin{equation}
    \sum_{n = 0}^{\infty}\ind_{B_i \cup I^{\sf c}}(\Phi_n^I) = \infty
\end{equation}
holds if and only if
\begin{equation}
    \sum_{n = 0}^{\infty}\ind_{B_i}(\Phi_n^I) = \infty 
    \lor 
    \sum_{n = 0}^{\infty}\ind_{I^{\sf c}}(\Phi_n^I) = \infty, 
\end{equation}
we may rewrite \cref{eqn:aug-almost-sure} to obtain:
\begin{equation}
    \Prob_\mu^I\left( 
    \bigwedge_{i = 1}^k \sum_{n = 0}^{\infty} \ind_{A_i}(\Phi_n^I) < \infty
    \lor 
    \sum_{n = 0}^{\infty}\ind_{B_i}(\Phi_n^I) = \infty
    \lor
    \sum_{n = 0}^{\infty}\ind_{I^{\sf c}}(\Phi_n^I) = \infty
    \right) = 1, 
\end{equation}
and by propositional logic:
\begin{equation}\label{eqn:Step0}
    \Prob_\mu^I\left(
    \sum_{n = 0}^{\infty}\ind_{I^{\sf c}}  (\Phi_n^I) = \infty
    \lor 
    \bigwedge_{i = 1}^k \sum_{n = 0}^{\infty} \ind_{A_i}(\Phi_n^I) < \infty
    \lor 
    \sum_{n = 0}^{\infty}\ind_{B_i}(\Phi_n^I) = \infty
    \right) = 1.
\end{equation}

We observe that 
\begin{equation}\label{eqn:Step1}
    \Prob_\mu^I \left(
    \mc^I \notin I^\omega 
    \land 
    \sum_{n = 0}^{\infty}\ind_{I^{\sf c}}  (\Phi_n^I) < \infty
    \right) = 0
\end{equation}
since under the transition kernel $P^I$, any trajectory that exits $I$ must necessarily visit $I^{\sf c}$ infinitely many times. Furthermore, for any specification $L_1 \in \mathcal{F}$ we have the following relation between the probability measures ${\sf P}^I_\mu$ and ${\sf P}_\mu$ induced on the trajectory space:
\begin{equation}\label{eqn:preStep2}
    \Prob_\mu^I(\mc^I \in I^\omega \land \mc^I \in L_1) = \Prob_\mu( 
    \mc \in I^\omega \land \mc \in L_1),
\end{equation}
because $P^I$ is equal to $P$ for all states in $I$.

By complementation, this implies that for any $L_2 \in {\cal F}$:
\begin{equation}\label{eqn:Step2}
    \Prob_\mu^I(\mc^I \notin I^\omega \lor \mc^I \in L_2) = \Prob_\mu( 
    \mc \notin I^\omega \lor \mc \in L_2),
\end{equation}

Combining \cref{eqn:Step0,eqn:Step1,eqn:Step2} we conclude 
\begin{equation}
    \Prob_\mu \left(
    \mc \notin I^\omega 
    \lor
    \underbrace{
    \bigwedge_{i = 1}^k \sum_{n = 0}^{\infty} \ind_{A_i}(\Phi_n) < \infty
    \lor 
    \sum_{n = 0}^{\infty}\ind_{B_i}(\Phi_n) = \infty}_{\mc \in L_2}
    \right) = 1.
\end{equation}

By complementation:
\begin{equation}
    \Prob_\mu\left(
    \mc \in I^\omega \land \mc \notin L_2
    \right) = 0.
\end{equation}

Adding $\Prob_\mu ( \mc \in I^\omega \land \mc \in L_2 )$ to both sides,
\begin{equation}
    \Prob_\mu\left(
    \mc \in I^\omega \land \mc \notin L_2
    \right) + \Prob_\mu ( \mc \in I^\omega \land \mc \in L_2 ) = \Prob_\mu ( \mc \in I^\omega \land \mc \in L_2 ).
\end{equation}
Applying the law of total probability:
\begin{equation}
    \Prob_\mu\left(
    \mc \in I^\omega 
    \right) = \Prob_\mu ( \mc \in I^\omega \land \mc \in L_2 ), 
\end{equation}
from which applying the definition of conditional expectation yields the desired \cref{eqn:conditional-1}.

\section{Proof of \Cref{thm:quantitative-streett}}

By \cref{qs:SI-dec,qs:SI-ind}, and \cref{thm:invar} we conclude:
\begin{equation}
    {\sf P}_\mu({\mathbf \Phi} \in I^\omega) \geq 1 - \mu V_0.\label{eqn:lem10-step1}
\end{equation}
By \cref{qs:dec,qs:inc,qs:noninc}, and \cref{thm:streett}, we conclude:
\begin{equation}\label{eqn:lem10-step2}
        \Prob_\mu \left( 
        \bigwedge_{i = 1}^k \sum_{n = 0}^{\infty} \ind_{A_i}(\Phi_n) < \infty \lor 
        \sum_{n = 0}^{\infty}\ind_{B_i}(\Phi_n) = \infty
        \mid \mc \in I^\omega \right) = 1. 
\end{equation}
Applying \cref{thm:decomp} with $p = 1 - \mu V_0$ to \cref{eqn:lem10-step1,eqn:lem10-step2}, we conclude that 
\begin{equation}\label{eqn:lem10-step3}
        \Prob_\mu \left( 
        \bigwedge_{i = 1}^k \sum_{n = 0}^{\infty} \ind_{A_i}(\Phi_n) < \infty \lor 
        \sum_{n = 0}^{\infty}\ind_{B_i}(\Phi_n) = \infty
         \right) \geq 1 - \mu V_0. 
\end{equation}

\section{Case Studies}\label{sec:casestudies}

\subsection{Gambler's Ruin}\label{casestudy:gamblerruin}

\begin{align}
    \Phi_0 &= 10\\
    \Phi_{n+1} &= \begin{cases}
        0 &\Phi_n = 0\\
        \Phi_n + W_n &\Phi_n > 0 
    \end{cases}
\end{align}
where $W_n = 1$ with probability $\frac{51}{100}$, and $W_n = -1$ with probability $\frac{49}{100}$.

\subsection{Gambler's Ruin (control)}\label{casestudy:gamblerruin-control}

\begin{align}
    \Phi_0 &= 10\\
    \Phi_{n+1} &= \begin{cases}
        0 &\Phi_n = 0\\
        \Phi_n + W_n &\Phi_n > 0 
    \end{cases}
\end{align}
where $W_n = 1$ with probability $\frac{1}{2}+\kappa$, and $W_n = -1$ with probability $\frac{1}{2}-\kappa$, $\kappa \in K = \{ \kappa \colon -1/4 \leq \kappa \leq 1/4 \}$.

\subsection{Becoming Rich Once}\label{casestudy:becomingrichonce}

\begin{align}
    \Phi_0 &= 50\\
    \Phi_{n+1} &= \begin{cases}
        0 &\Phi_n = 0\\
        \Phi_n + W_n &\Phi_n > 0 
    \end{cases}
\end{align}
where $W_n = 1$ with probability $\frac{51}{100}$, and $W_n = -1$ with probability $\frac{49}{100}$.

\subsection{Becoming Rich Once (control)}\label{casestudy:becomingrichonce-control}

\begin{align}
    \Phi_0 &= 50\\
    \Phi_{n+1} &= \begin{cases}
        0 &\Phi_n = 0\\
        \Phi_n + W_n &\Phi_n > 0 
    \end{cases}
\end{align}
where $W_n = 1$ with probability $\frac{1}{2} + \kappa$, and $W_n = -1$ with probability $\frac{1}{2} - \kappa$, with $\kappa \in K = \{ \kappa \colon -1/4 \leq \kappa \leq 1/4 \}$.

\subsection{Reactivity 1}\label{casestudy:reactivityOne}

\begin{align}
    \Phi_0 &= 5\\
    \Phi_{n+1} &= \begin{cases}
        \Phi_n + W_n &0 < \Phi_n < 6\\
        \Phi_n - 1 &\Phi_n \leq 0\\
        \Phi_n  &\Phi_n \geq 6
    \end{cases}
\end{align}
where $W_n = 1$ with probability $\frac{1}{2}$, and $W_n = -1$ with probability $\frac{1}{2}$.

\subsection{Reactivity 1 (control)}\label{casestudy:reactivityOne-control}

\begin{align}
    \Phi_0 &= 5\\
    \Phi_{n+1} &= \begin{cases}
        \Phi_n + W_n &0 < \Phi_n < 6\\
        \Phi_n - 1 &\Phi_n \leq 0\\
        \Phi_n  &\Phi_n \geq 6
    \end{cases}
\end{align}
where $W_n = 1$ with probability $\frac{1}{2} + \kappa$, and $W_n = -1$ with probability $\frac{1}{2} - \kappa$, with $\kappa \in K = \{ \kappa \colon -1/4 \leq \kappa \leq 1/4 \}$.

\subsection{Reactivity 2}\label{casestudy:reactivityTwo}

\begin{align}
    \Phi_0 &= 5\\
    \Phi_{n+1} &= \begin{cases}
        0 & \Phi_n = 0\\
        \Phi_n + W_n &1 \leq \Phi_n < 20\\
        \Phi_n + 1 &\Phi_n \geq 20
    \end{cases}
\end{align}
where $W_n = 1$ with probability $\frac{1}{2}$, and $W_n = -1$ with probability $\frac{1}{2}$.

\subsection{Reactivity 2 (control)}\label{casestudy:reactivityTwo-control}

\begin{align}
    \Phi_0 &= 5\\
    \Phi_{n+1} &= \begin{cases}
        0 & \Phi_n = 0\\
        \Phi_n + W_n &1 \leq \Phi_n < 20\\
        \Phi_n + 1 &\Phi_n \geq 20
    \end{cases}
\end{align}
where $W_n = 1$ with probability $\frac{1}{2}$, and $W_n = -1$ with probability $\frac{1}{2}$.
where $W_n = 1$ with probability $\frac{1}{2} + \kappa$, and $W_n = -1$ with probability $\frac{1}{2} - \kappa$, with $\kappa \in K = \{ \kappa \colon -1/4 \leq \kappa \leq 1/4 \}$.

\subsection{RepeatedCoin (control)}\label{casestudy:repeatedCoin-control}

\begin{align}
    \Phi_0 &= 1\\
    \Phi_{n+1} &= \begin{cases}
        0 & \Phi_n = 0\\
        W_n \cdot (\Phi_n + 1) &1 \leq \Phi_n < 20\\
        \Phi_n&\Phi_n \geq 20
    \end{cases}
\end{align}
where $W_n = 1$ with probability $\kappa$ and $W_n = 0$ with probability $1-\kappa$, with $\kappa \in K = \{ \kappa \colon  0 \leq \kappa \leq 1 \}$.

\end{document}